\documentclass[fleqn,11pt,letterpaper]{article}

\usepackage{amsfonts}
\usepackage{amssymb}
\usepackage{amsmath}
\usepackage{amsthm}
\usepackage{enumitem}
\usepackage{tabularx}
\usepackage{multirow}
\usepackage{float}
\usepackage{graphicx}
\usepackage{epstopdf}
\usepackage{color}
\usepackage[margin=2.57cm]{geometry}
\definecolor{darkred}{rgb}{0.5,0.2,0.2}
\usepackage[pdftitle={Roughness_signature_functions},pdfauthor={Peter Christensen},colorlinks=TRUE,allcolors=darkred]{hyperref}
\usepackage{booktabs}
\usepackage{pdflscape}
\usepackage[round]{natbib}
\usepackage{subcaption}
\usepackage{mwe}
\usepackage{bbm}
\usepackage{float}
\floatstyle{plaintop}
\restylefloat{table}

\setcounter{MaxMatrixCols}{10}

\restylefloat{table}

\theoremstyle{plain}
\newtheorem{theorem}{Theorem}[section]

\newtheorem{assumption}{Assumption}[section]

\newtheorem{corollary}{Corollary}[section]

\newtheorem{lemma}{Lemma}[section]
\newtheorem{proposition}{Proposition}[section]

\theoremstyle{definition}


\theoremstyle{remark}
\newtheorem{remark}{Remark}[section]


\def\bi{\begin{itemize}}
\def\ei{\end{itemize}}

\allowdisplaybreaks
\graphicspath{{Figures/}}
\numberwithin{equation}{section}

\newif\ifi

\linespread{1.15}
\usepackage{caption}
\begin{document}

\title{Roughness Signature Functions}

\author{
Peter Christensen\thanks{
Department of Economics and Business Economics, 
Aarhus University, 
Fuglesangs All\'e 4,
8210 Aarhus V, Denmark.
E-mail:\
\href{mailto:pchr@econ.au.dk}{\nolinkurl{pchr@econ.au.dk}}. 
}
}

\maketitle
\begin{abstract}
    Inspired by the activity signature introduced by \cite{Todorov2010}, which was used to measure the activity of a semimartingale, this paper introduces the roughness signature function. The paper illustrates how it can be used to determine whether a discretely observed process is generated by a continuous process that is rougher than a Brownian motion, a pure-jump process, or a combination of the two. Further, if a continuous rough process is present, the function gives an estimate of the roughness index. This is done through an extensive simulation study, where we find that the roughness signature function works as expected on rough processes. We further derive some asymptotic properties of this new signature function. The function is applied empirically to three different volatility measures for the S\&P500 index. The three measures are realized volatility, the VIX, and the option-extracted volatility  estimator of \cite{Todorov2019}. The realized volatility and option-extracted volatility show signs of roughness, with the option-extracted volatility appearing smoother than the realized volatility, while the VIX appears to be driven by a continuous martingale with jumps. 

\end{abstract}

\bigskip

\noindent {\bf Keywords}:  Rough volatility, Lévy process, Brownian semi-stationary process, Activity index, Blumenthal-Getoor index, Continuous time. \\

\baselineskip=17pt

\newpage

\section{Introduction}\label{sec:introduction}

In this paper, we introduce the roughness signature function, which is a similar statistic to the activity signature function introduced by \cite{Todorov2010}, and explore its behaviour when used on continuous time processes with sample paths that are less regular than those of a Brownian motion. The activity signature function was initially introduced to measure the activity index of an Itô semimartingale. The activity index is a generalization of the Blumenthal-Getoor index of pure-jump Lévy processes to general semimartingales, and it thus describes the fine structure of the process. The higher the activity index, the more vibrant the process will be, where a process of finite activity will have an activity index of $0$, while a continuous martingale will have an activity index of $2$. The estimator of the activity index is based on a ratio of the $p$'th realized power variation of the process, sampled at two different frequencies, and evaluating the estimator over different values of $p$ one can plot it as a function, which is what is termed the activity signature function. The shape and level of the curve will reveal if a process contains a continuous Brownian-driven component, a pure jump component, or both, and the activity of the jumps if they are present. 

For a general purely continuous process, the fine structure can be described various ways, including the Hôlder continuity and the fractal dimension. A common way to characterize the fine structure, is through the limiting behaviour of the $p$-variation. This is for example what is done in \cite{HanSchied2021}. Following them, we will say that a continuous function $x:[0,T]\to\mathbb{R}$ has roughness exponent $H$ if, for any partition $\{t_0,...,t_n\}$ of $[0,T]$ with vanishing grid size, it is the case that
\begin{equation}\label{eq:roughness_def}
    \lim_{n\to\infty}\sum_{i=1}^n\vert x(t_i) - x(t_{i-1})\vert^p= \begin{cases}0 & \text { for } p>1 / H, \\ \infty & \text { for } p<1 / H .\end{cases}
\end{equation}
This roughness index will be informative about how smooth the paths of the function is, where the higher the roughness index, the smoother the function. For a fractional Brownian motion (fBm) with Hurst exponent $H$, which is the cannonical example of the rough process, the paths will exactly have roughness index equal to the Hurst exponent. For a general Brownian semi-stationary ($\mathcal{BSS}$) process, \cite{Corcuera2013} introduces an estimator of the roughness index of a process, which, like the roughness signature function, is based on a ratio of power variations at different frequencies. Motivated by the similarity of the roughness index and the activity index, as well as the similarity of the \cite{Todorov2010} and \cite{Corcuera2013} estimators, this paper introduces the roughness signature function, and illustrates how it can be used to distinguish between pure-jump processes and processes that are rougher than a Brownian motion, and that it can recover the Hölder continuity of the process when a continuous part is present. In particular, we consider the case where the Brownian motion is replaced by a $\mathcal{BSS}$ process, which has the flexibility to exhibit a roughness index $H$ for any $H\in(0,1)$. We find that we can distinguish between the different processes and that when the underlying process is a $\mathcal{BSS}$ process, we can recover the roughness index. 

Being able to differentiate between rough processes and processes driven by jumps is relevant in many settings. Recently, a new generation of stochastic volatility models termed rough stochastic volatility models have emerged, in which the volatility is modelled using processes that are rougher than a Brownian motion. They were initially introduced by \cite{Gatheral2018}, who found rough volatility models to be remarkably consistent with empirical observations, including the power-law at-the-money skew of implied volatility. There is, however, also support for volatility being a discontinuous jump process, both by the very successful stochastic volatility models that incorporate jumps such as the jump-diffusion models of \cite{Duffie2000}, as well as the non-parametric results by \cite{Todorov2011} implying that volatility should be modelled as a high-activity pure-jump process. Being able to tell whether volatility jumps or is rough, and knowing the activity level of the jumps or Hurst exponent of the continuous part, would help practitioners within the financial sector use more realistic models. 

The approach also has applications outside of finance. An important topic within statistical physics is turbulence modelling. For turbulence modelling, both stable Lévy processes and rough processes have been considered (see e.g. \cite{Takayasu1984} and \cite{Corcuera2013}). Within this topic, it is also interesting to be able to discern which model is correct. It is also interesting to recover the smoothness index of the process if the process is found to be rough. The reason is that a smoothness-estimate of turbulence of $-1/6$ in what is called the inertial range will exactly correspond to Kolmogorov's celebrated $5/3$ law \citep{Kolmogorov1941a,Kolmogorov1941b}, which has also been studied using statistics similar to the ones in this paper in  \cite{Bennedsen2020} and \cite{Corcuera2013}.

This paper is closely related to a number of existing papers. As stated above, the idea behind the roughness signature function is to consider a ratio of power variation of the observed process sampled at two different frequencies and then utilize the fact that the speed of convergence of the realized power variation only depends on the activity index. A similar idea has been utilized in many other papers within the high-frequency econometrics literature.  In  \cite{Corcuera2013}, an estimator for the smoothness parameter of a $\mathcal{BSS}$ process based on the realized power variation is introduced. This structure of this estimator is very similar to the \cite{Todorov2010} estimator, as will be shown below. The statistic considered in \cite{Corcuera2013} was based on second-order increments rather than first-order increments, which was done to ensure that their results are valid for all smoothness levels, but as we shall see below, when our interest is restricted to rough processes, the same results carry through with first-order increments. 

Using realized power variation to study the fine properties of a process for which we have discrete high-frequency observations is not unique to the two articles on which we base our analysis. \cite{Woener2011} considers an identical problem to the one considered in this paper, i.e., discriminating between a process driven by a rough process and a pure-jump process and estimating the roughness or activity using realized power variation. However, they use a statistic based on the log-power variation rather than a ratio of power variations. The resulting statistic yields estimates of the roughness or activity that are quite biased, limiting its usefulness in practice. Further, they restrict their rough processes to be driven by fractional Brownian motions, whereas we allow for the more flexible class of $\mathcal{BSS}$ processes.
\cite{Bennedsen2020} considers the same problem as \cite{Corcuera2013} but focuses on lowering the estimator's variance by using more frequencies and considers how to handle the microstructure noise present in high-frequency financial observations. In concurrent work with the present paper, \cite{ChongTodorov2023} develops a formal test for the roughness of volatility, which is based on looking at the autocovariance of a measure of spot volatility. In \cite{Ait-Sahalia2009}, a similar statistic is introduced to test whether jumps are present in the process. This statistic will converge to 1 if jumps are present. The setting here is very different from the one in this paper, as they assume a continuous part is present and want to test if there is additionally a jump component.  In \cite{Ait-Sahalia2009a}, a similar problem is considered, as they try to estimate the activity level of the jumps of a process when a continuous part is present. One of their statistics is also based on a change-of-frequency approach, but here all small increments are discarded to isolate the jump part, and the statistic is therefore very different. 

The project is also related to the rapidly growing literature on rough stochastic volatility models. This strain of literature was started by the seminal paper by \cite{Gatheral2018}, who were the first to document the fit of rough volatility models to empirical observations. As mentioned, they find that modelling log-volatility using a rough process makes it possible to match the power-law structure of the at-the-money implied volatility surface, as well as the monofractal scaling relationship of the $p$-variation of log-volatility. Following this seminal article, the notion that volatility should be rough has seen large support. Many papers have confirmed that volatility series behave like rough processes using more sophisticated estimators and different volatility indices. These include \cite{Bennedsen2021}, who considers a broader class of assets and more complex rough processes,  \cite{BCSV2020}, who develops a GMM estimator, which takes the error when estimating the latent volatility variable into account, and \cite{LivieriEtAl2018}, who considers estimating the roughness of volatility using the Black-Scholes implied volatility of short-dated options rather than the high-frequency return-based volatility estimators of the other papers. Further, \cite{WangEtAl2023} has studied a method-of-moments-based estimator for the fractional Ornstein-Uhlenbeckm model and applied to to modelling realized variance, while \cite{BennedsenEtAl2023} has developed a composite likelihood estimator for Gaussian moving average processes and used it to estimate the roughness of the spot volatility of bitcoin prices. Several papers further consider how to specify rough volatility models that can be used in practice. Two models that have gained significant traction are the rough Bergomi model of \cite{Bayer2016a} and the rough Heston model of \cite{ElEuch2018}. In both these models, volatility is driven by variations of the fractional Brownian motion.

The structure of this paper is as follows. Section \ref{sec:activity} describes our setup, introduces the necessary assumptions and formally defines the index of activity as well as the roughness index.  Section \ref{sec:ASF} formally defines the roughness signature function and presents the asymptotic results for the roughness signature function. Section \ref{sec:sim_study} carries out a simulation study, documenting the performance of the roughness signature function in finite samples. Section \ref{sec:emp_application} applies the roughness signature function in an empirical application, studying which processes can describe various volatility measures. Section \ref{sec:conclusion} concludes. 

\section{The Index of Activity}\label{sec:activity}

\subsection{Setup}\label{subsec:setup}
In this section, we will introduce the different processes we are working with and the different indices we will be working with. The main goal is to measure the ``activity'' of a given stochastic process, $\mathcal{X}$. Much like \cite{Todorov2010}, we will assume that there are three potential underlying processes for generating our observations, which are defined on some probability space $(\Omega,\mathcal{F},\mathbb{P})$ equipped with a filtration $\mathbb{F}$ satisfying the usual conditions, such that the processes are adapted to this filtration. 

The continuous model: 
\begin{equation} \label{eq:cont_model}
    X_{t}=\int_{-\infty}^{t}g(t-s)\sigma_{1s}dW_s
\end{equation}

The pure-jump model: 
\begin{equation} \label{eq:jump_model}
    Y_{t}=\int_{0}^{t}b_{2s}ds+\int_{0}^{t}\int_{\mathbb{R}}\sigma_{2s-}\kappa(x)\tilde{\mu}(ds,dx)+\int_{0}^{t}\int_{\mathbb{R}}\sigma_{2s-}\kappa'(x)\mu(ds,dx)
\end{equation}

and the continuous plus jumps model: 
\begin{equation} \label{eq:cont_jump_model}
    Z_{t}=X_{t}+Y_{t}
\end{equation}

Here, $W=(W_t)_{t\in\mathbb{R}}$ is a two-sided Brownian motion, $g:\mathbb{R}^+\to\mathbb{R}$ is a deterministic weight-function such that $g\in L^2(\mathbb{R}^+)$, and $\sigma_1=(\sigma_{1t})_{t\in\mathbb{R}}$ and $\sigma_2=(\sigma_{2t})_{t\in\mathbb{R}^+}$ are cádlág processes. $\mu$ is a jump measure on $\mathbb{R}_{+}\times\mathbb{R}$ with compensator $\nu(x)dxds$, $\tilde{\mu}(ds,dx):=\mu(ds,dx)-\nu(x)dxds, \kappa'(x)=x-\kappa(x)$ where $\kappa(x)$ is a continuous truncation function. If the process $X$ contains a jump-part, we will always assume the observed path contains at least one jump. 

\begin{remark}
    We note that the classes of processes considered are quite broad, and contain many interesting examples known from financial econometrics. For example, \eqref{eq:cont_model} encompasses all Gassuian models, as long as they satisfy the so-called pure-nondeterminism assumption, which includes the fractional Ornstein-Uhlenbeck model studied in many papers---see e.g. \cite{BennedsenEtAl2023}. The $\sigma$ then allows one to add stochastic volatility (of volatility) to these. \eqref{eq:jump_model} will included many of the well-known Lévy processes seen in finance, for example the famous CGMY model of \cite{CarrEtAl2002}.
\end{remark}


The main difference between this setup and the one considered in \cite{Todorov2010} is the specification of the continuous process. \cite{Todorov2010} only considers Itô semimartingale processes. In this setup, we allow the process to be a Brownian semi-stationary ($\mathcal{BSS}$) process, a class of processes introduced by \cite{ole-jurgen09}. This class of processes can be semi-martingales under certain conditions, but this will not generally be true. In particular, this class of processes is able to have sample paths are rougher than those  of a Brownian motion, almost surely, and have thus been proposed as models for volatility \citep{Bennedsen2021}. 

\subsection{Assumptions}\label{subsec:assumptions}

We will need to make some assumptions on the processes introduced above to get existence of the processes and the convergence results. Most of our assumptions will be shared with \cite{Todorov2010}, but some modifications have to be made, seeing as we have left the semimartingale setting. To introduce the assumptions of this paper, we need some further notation. First, we consider the centered stationary Gaussian process $G=(G_t)_{t\in\mathbb{R}}$, called the Gaussian core of $X$, which for each $t$ is defined as 
\begin{equation}
    G_t:=\int_{-\infty}^tg(t-s)dW_s
\end{equation}
that is, $G$ is defined as $X$, but where the stochastic volatility $\sigma_{1}$ is set to constantly equal $1$. Note that this process is well-defined as $g$ is square integrable. We define the correlation kernel $r$ of $G$ as 
\begin{equation}
    r(t):=\frac{\int_0^\infty g(u)g(u+t)du}{\Vert g\Vert^2_{L^2(\mathbb{R}^+)}}
\end{equation}
and the variogram, $R$, of the kernel as 
\begin{equation}
    R(t):=\mathbb{E}[(G_{t+s}-G_s)^2]=2\Vert g\Vert^2_{L^2(\mathbb{R}^+)}(1-r(t))
\end{equation}

First, we have the following regularity conditions on the kernel function $g$
\begin{assumption}\label{assumption:kernel}
It holds that
\begin{enumerate}
    \item $g(x)=x^\alpha L_g(x)$.
    \item $g^{(k)}(x)=x^{\alpha-k}L_{g^{(k)}}(x)$ and, for any $\varepsilon>0$, we have $g^{(k)}\in L^2((\varepsilon,\infty))$. Furthermore, $\vert g^{(k)}\vert$ is non-increasing on $(a,\infty)$ for some $a>0$.
    \item For any $t>0$
    \begin{equation}
        F_t=\int_1^\infty\vert g^{(k)}(s)\vert^2\sigma_{t-s}^2ds<\infty.
    \end{equation}
\end{enumerate}
here, for each smooth function $f$, $f^{(k)}$ denotes the $k$'th derivative of the function, while $L_f$ denotes a continuous function that is slowly varying around zero, and $\alpha\in(-1/2,1/2)\setminus\{0\}$.
\end{assumption}

Secondly, we need some assumptions on the variogram $R$ of the Gaussian kernel
\begin{assumption}\label{assumption:variogram}
For the smoothness parameter $\alpha$ from Assumption \ref{assumption:kernel} it holds that
\begin{enumerate}
    \item $R(x)=x^{2\alpha+1}L_R(x)$.
    \item $R^{(2k)}(x)=x^{2\alpha-2k+1}L_{R^{(2k)}}(x)$.
    \item There exists a $b\in(0,1)$ such that 
    \begin{equation}
        \limsup_{x\downarrow0}\sup_{y\in[x,x^b]}\left\vert\frac{L_{R^{(2k)}}(y)}{L_R(x)}\right\vert<\infty
    \end{equation}
\end{enumerate}
\end{assumption}


The assumptions relating to the kernel function is to ensure that the sample paths of the continuous process exhibits roughness, and the index $\alpha$ will exactly correspond to the roughness index, which will be defined below. The assumptions on the Gaussian core are required for deriving the asymptotic theory for power variation of differences of the Gaussian process $G$. In particular, Assumption \ref{assumption:variogram} implies that the small scale increments of $G$ behaves similar to those of a fractional Brownian motion $B^H$ with Hurst index $H=\alpha+1/2$. This implies that the limit theory of the power variation of $G$ is the same as that of the power variation of $B^H$. This relation is formalized in \cite{Corcuera2013}. These assumptions are standard in the literature when working with $\mathcal{BSS}$ processes, see e.g. \cite{Corcuera2013} and \cite{ole_jose_mark13}.

Next, we need some assumptions on the jump-process. First, we have some mild regularity conditions on the drift, volatility, and intensity processes
\begin{assumption}\label{assumption:levy_regularity}
The processes $\sigma_{2}$ and $b_{2}$ have cádlág paths; both of the processes $\sigma_{2}$ and $\sigma_{2-}=(\sigma_{2t-})_{t\in[0,\infty)}$  are everywhere different from zero. 
\end{assumption}

Secondly, we will an assumption on the behavior of the Lévy measure around zero. 
\begin{assumption}\label{assumption:levy_measure}
The Lévy density $\nu(x)$ can be decomposed as 
\begin{equation}
    \nu(x)=\nu_1(x) + \nu_2(x)
\end{equation}
\begin{equation}
    \nu_1(x) = \frac{A}{x^{\beta+1}} \mathbbm{1}_{\{x>0\}} + \frac{B}{\vert x\vert^{\beta+1}} \mathbbm{1}_{\{x<0\}}
\end{equation}
\begin{equation}
    \vert\nu_2(x)\vert\leq\frac{\phi(x)}{\vert x\vert^{\beta'+1}}
\end{equation}
where $A$ and $B$ are non-negative constants with $A+B>0$; $0\leq \beta' < \beta < 2$; and $\phi(x)$ is some non-negative slowly varying around zero function which is bounded at zero. 
\end{assumption}
Then, we will make some fairly technical assumptions on the relation between the drift term and the Lévy measure. 
\begin{assumption}\label{assumption:levy_drift}
    \begin{enumerate}
        \item If $\beta\leq1$ we assume that
    \begin{equation}
        \int_0^tb_{2s}ds-\int_0^t\int_\mathbb{R}\sigma_{2s}\kappa(x)\nu(x)dxds\equiv0,\quad\text{for every }t>0.
    \end{equation} 
    \item If $\beta=1$, then $\nu(x)$ and $\kappa(x)$ are symmetric and $b_{2t}\equiv0$ for every $t>0$. 
    \end{enumerate}
\end{assumption}

The assumption \ref{assumption:levy_measure} concerns the behaviour of the jump measure around zero and is less trivial. We are essentially imposing that the Lévy measure should behave like that of a stable process around zero. This somewhat strict assumption is necessary to get the convergence of the realized variation we are interested in when considering a pure-jump process and powers less than the Blumenthal-Getoor index of the process. When we are imposing this assumption, the Blumenthal-Getoor index of the pure-jump process will exactly coincide with $\beta$. The intuition is that for a $\beta$-stable process, the Blumenthal-Getoor index will be $\beta$, and for any pure-jump process, the Blumenthal-Getoor index will purely be determined by the small jumps. This is because there is always a finite number of jumps greater than any fixed threshold. When the Lévy measure behaves like that of a $\beta$-stable process for small values, we will thus get the same index. This assumption is very typical when working with estimators based on infill asymptotics of realized power variation statistics, see e.g. \cite{Todorov2010}, \cite{Ait-Sahalia2009a} and \cite{Woener2011}.

The third assumption essentially states that that when we have finite activity jumps, there should be no drift. This is done to ensure that the drift term $b_{2t}$ and the compensation of the jumps do not overpower the activity of the jump measure. This is to ensure that we can extract the activity index of the jumps in the pure-jump case. We note that this can be a quite strict assumption in financial econometrics, where many processes are known to exhibit jumps. If one is worried about a drift, a version of the roughness signature function based on second order increments can be constructed, which will be robust to the presence of a drift for low-activity jump processes---this follows from the results in \cite{Todorov2013}. 

\subsection{The Index of Activity}

We assume that we are observing a stochastic process, $\mathcal{X}$, on a fixed time interval $[0,T]$. We will assume that we observe it at equidistant times, $0,\Delta_n,2\Delta_n,...,\lfloor\Delta_n/t\rfloor\Delta_n$, with distance $\Delta_n$ between the observations. We will be considering infill asymptotics, i.e. we will think of $\Delta_n$ as being small and study the behaviour of our estimators as $\Delta_n\to0$. 

To introduce the activity index, we will first define the realized $p$-variation of a path of $\mathcal{X}$ sampled at frequency $\nu\Delta_n$: 
\begin{equation} \label{eq:p_variation_def}
    V(\mathcal{X},p,\nu,\Delta_n)_t=\sum_{i=\nu}^{\lfloor\Delta_n/t\rfloor}\vert\Delta_i^{n,\nu}\mathcal{X}\vert^p,\quad p>0,t\in[0,T]
\end{equation}

where $\Delta_i^{n,\nu}\mathcal{X}:=\mathcal{X}_{i\Delta_n}-\mathcal{X}_{(i-\nu)\Delta_n}$. Estimators based on this statistic is widely used. \cite{Todorov2010} covers convergence results of this statistic for semimartingales, while \cite{Corcuera2006} and \cite{Corcuera2013} covers it for integrals with respect to fractional Brownian motion and general $\mathcal{BSS}$ processes, respectively. 

Having defined this, we follow \cite{TodorovTauchen2011} and define the index of activity of an observed path of $\mathcal{X}$ as 
\begin{equation}
    \beta_{\mathcal{X},T} = \inf\left\{r>0:\underset{\Delta_n\to0}{\operatorname{plim}}V(X,r,1,\Delta_n)_T<\infty\right\}
\end{equation}

This quantity is defined pathwise, but under our assumptions, it will be the same on all paths, up to a set of paths with measure $0$. In the semimartingale setting of \cite{Todorov2010}, this index will always lie in the interval $[0,2]$, as the quadratic variation of a semimartingale always exists. In the more general setting we are in, we will see that it can take arbitrarily large values. For a pure-jump process, this activity index will exactly recover the (generalized) Blumenthal-Getoor index, as it reduces to the generalization of the Blumenthal-Getoor index of \cite{Ait-Sahalia2009a}, while for a continuous martingale, it will equal $2$. If a process is made up of multiple parts, e.g. both a jump-part and a continuous martingale, the activity index of the combined process will be given by the activity of the most active component.

For a general continuous process, the activity index will reflect the roughness of the sample paths of the process. It coincides with the reciprocal \eqref{eq:roughness_def}, which \cite{HanSchied2021} has coined the Hurst roughness exponent. In general, this index will be informative about the regularity of the sample paths of the process. The higher this activity index is - corresponding to a lower roughness index - the less regular the sample paths will be. For a fractional Brownian motion, this Hurst roughness index will exactly coincide with the classical Hurst  index, see e.g. Proposition 2.1 of \cite{Bennedsen2020}. For the more general class of processes we consider, this index will be directly related to the parameter $\alpha$ of Assumption \ref{assumption:kernel}, as it will always be the case that 
\begin{equation}
    \beta_{X,T}=\frac{1}{\alpha+1/2}
\end{equation}
and it is therefore natural to think of $\alpha$ as a ``roughness index''. 

For both the continuous and the pure-jump specification of the underlying process, the index of activity is thus informative about the ``fine structure'' of the sample paths of the process, and an estimator of the activity index will thus help us understand which type of model is appropriate to describe an observed time series. In this respect, it is worth noting that a Brownian motion -- the most ``active'' semi-martingale -- has activity index $2$, and continuous processes that are ``rougher'' than a Brownian motion will thus have an an activity index that is higher than $2$. The index of activity can thus separate the pure-jump and rough purely-continuous processes in our setup, as all the pure-jump processes will have activity less than $2$, while all rough purely continuous processes will have an index larger than $2$. 

\section{Roughness Signature Function}\label{sec:ASF}
\subsection{Definition}
For estimating the activity (and hence roughness) index, we will be relying on how the scaling used to achieve convergence depends on the activity index. In particular, the difference in the scaling needed for convergence of the power variation of the discontinuous processes when the power is below and above the index of activity. To be more concrete, for all the processes we consider it will be the case that for powers below the index of activity
\begin{equation}
    k^{1-p/\beta_{\mathcal{X},T}}\Delta_n^{1-p/\beta_{\mathcal{X},T}}V(\mathcal{X},p,1,k\Delta_n)_T\overset{\mathbb{P}}{\longrightarrow}\Xi_T(p)
\end{equation}
as $\Delta_n\to0$, where $\Xi_T(p)$ is a stochastic process depending on the underlying process driving the observations of $\mathcal{X}$. This suggests that for low powers, we can recover the activity level of the paths by a regression of $\log(V(\mathcal{X},p,1,k\Delta_n))$ on $\log(k\Delta_n)$. This is exactly how the activity signature function is derived in \cite{Todorov2010}. If only two points are considered, we get that the estimate of the activity will exactly be 

\begin{equation}\label{eq:def_asf_TT}
    \hat{\beta}^{\text{T\&T}}(\mathcal{X},p,k,\Delta_n)_T = \frac{\log(k)p}{\log(k)+\log(V(X,p,k\Delta_n)_T)-\log(V(X,p,\Delta_n)_T)}
\end{equation}

If we consider this estimator as a function of $p$, we arrive at the original activity signature function. The idea is now to examine its behaviour for different values of $p$. Based on the arguments above, it is intuitively clear that for powers lower than the activity level $\beta_{\mathcal{X},T}$, the activity signature function will converge to $\beta_{\mathcal{X}X,T}$. For power powers higher than $\beta_{\mathcal{X},T}$, the convergence will depend on whether a jump-part is present or not. If it is, no scaling factor is needed for the convergence of $V(\mathcal{X},p,k\Delta_n)$, and we would thus expect the activity signature function to converge to $p$, as we would have that $\log(V(\mathcal{X},p,k\Delta_n))-\log(V(\mathcal{X},p,\Delta_n))$ would converge to zero. If no jump part is present, the scaling factor would still depend on the sampling frequency, and we would expect it to remain at the activity index. These intuitive results are formalized for semimartingales in \cite{Todorov2010}.

We will base our roughness signature function on the same idea, but we will make two alterations. First, we define sampling at lower frequencies differently than what is done in \cite{Todorov2010}. By their definition, sampling at half frequency effectively amounts to discarding every other observation, i.e. the sample is halved. We will instead be using the definition of \cite{Corcuera2013}, and define sampling at half frequency by taking overlapping increments, and hence arriving at a sample size of $n-1$ when sampling at half frequency. Intuitively this should lead to a more efficient estimator, which was also verified in our simulation experiments. This corresponds to controlling sampling speed by using the $\nu$ parameter in \eqref{eq:p_variation_def}, instead of multiplying onto the $\Delta$, as was done in equation \eqref{eq:def_asf_TT}. Secondly, our main motivation for studying the roughness signature function is to examine the roughness of volatility, and from the recent literature on the topic, we expect volatility to be very rough. In their pioneering article, \cite{Gatheral2018} found evidence that volatility across a wide range of indices behaved as a fBm with Hurst exponent $H\in(0.07,0.15)$, i.e. a very rough process. In subsequent articles, it has been shown to be even rougher, with e.g. \cite{BCSV2020} finding evidence of $H$ as low as $0.02$. In light of this, using the regular activity signature function may lead to issues, as we have seen that the level of the activity signature function is the inverse of the Hurst exponent, which for a very low Hurst exponent will give a very high and unstable signature function. We will therefore be considering the reciprocal of the the activity signature function.  

We therefore define the \emph{roughness signature function} as follows

\begin{equation}\label{eq:def_asf}
    \hat{H}(\mathcal{X},p,\nu,\Delta_n)_T = \frac{\log(V(X,p,\nu,\Delta_n)_T)-\log(V(X,p,1,\Delta_n)_T)}{\log(k)p}
\end{equation}

\begin{remark}  
    This roughness signature function is essentially equivalent to the COF-estimator of roughness introduced in \cite{Corcuera2013}. The main difference is that they chose to base their estimator on second order increments to get asymptotic normality for the the full spectrum $\alpha\in(-1/2,1/2)$, while we base the roughness signature function on first order increments, which are sufficient for rough processes. Further, their estimator is only developed for point estimates, i.e. for a fixed $p$, typically $p=2$.  We could have also defined the roughness signature function based on second order increments, which would work in an identical way to the one defined above, but this would lead to a less efficient statistic, due to the loss of information when differencing the second time. This is formally derived for pure-jump semimartingales in \cite{Todorov2013}. The ``gain'' from using second order increments would however be increased robustness to the pressence of a drift, in particular when estimating the activity of finite variation pure-jump processes, as mentioned earlier. 
\end{remark}

When working with the roughness signature function, we will typically use $\nu=2$, i.e. increments sampled at half the speed of the original series, and will denote $\hat{H}(\mathcal{X},p,\Delta_n)_T:=\hat{H}(\mathcal{X},p,2,\Delta_n)_T$. It is this variation of the roughness signature function that we will focus on in the following section.

\subsection{Asymptotic Results}\label{sec:asymptotics}
In our setting, we get the following convergence of the roughness signature function: 

\begin{theorem}\label{thm:consistency}
 Fix $T>0$, and let $H=(\alpha+1/2)^{-1}$, where $\alpha$ is the smoothness index of Assumption \ref{assumption:kernel}, and $\beta_{\mathcal{X},T}$ denote the index $\beta$ of Assumption \ref{assumption:levy_measure}. Then we have, for $\Delta_n\to0$, 
\begin{enumerate}
    \item If $\mathcal{X}\equiv X$, and assumptions \ref{assumption:kernel} and \ref{assumption:variogram} are satisfied, then we have 
    \begin{equation}
        \hat{H}(\mathcal{X},p,\Delta_n)_T\overset{\mathbb{P}}{\longrightarrow}H
    \end{equation}
    where the convergence is locally uniform in $p$ on $(0,\infty)$
    \item If $\mathcal{X}\equiv Y$, and assumptions \ref{assumption:levy_regularity}, \ref{assumption:levy_measure} and \ref{assumption:levy_drift} are satisfied, then
    \begin{equation}
        \hat{H}(\mathcal{X},p,\Delta_n)_T\overset{\mathbb{P}}{\longrightarrow}
        \begin{cases}
            1/\beta_{\mathcal{X},T} & \text{if }p<\beta_{\mathcal{X},T}\\
            1/p & \text{if }p>\beta_{\mathcal{X},T}
        \end{cases}
    \end{equation}
    where the convergence is locally uniform in $p$ on $(0,\beta_{X\mathcal{X},T})\cup(\beta_{\mathcal{X},T},\infty)$
    \item If $\mathcal{X}\equiv Z$ with atleast one jump on the path, and Assumption \ref{assumption:kernel} and \ref{assumption:variogram} are satisfied, then we have 
    \begin{equation}
        \hat{H}(\mathcal{X},p,\Delta_n)_T\overset{\mathbb{P}}{\longrightarrow}
        \begin{cases}
            H & \text{if }p<H^{-1}\\
            1/p & \text{if }p>H^{-1}
        \end{cases}
    \end{equation}  
    where the convergence is locally uniform in $p$ on $(0,2)\cup(2,\infty)$.
\end{enumerate}
\end{theorem}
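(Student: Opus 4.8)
The plan is to deduce the third claim from the first two by a domination argument: since $Z=X+Y$, the realized $p$-variation of $Z$ at each frequency is, to leading order, governed by whichever of $V(X,p,\nu,\Delta_n)_T$ and $V(Y,p,\nu,\Delta_n)_T$ is of larger order, and the power at which the dominant term switches is the activity index $H^{-1}$ of the continuous part; because $X$ is rougher than a Brownian motion, $H^{-1}>2$, which in particular exceeds the Blumenthal--Getoor index $\beta_{\mathcal{X},T}<2$ of the jump part. The only elementary input is the (reverse) triangle inequality in $\ell^{p}$: writing $\Delta_i^{n,\nu}Z=\Delta_i^{n,\nu}X+\Delta_i^{n,\nu}Y$, for $p\ge 1$ one has
\[
\bigl|V(Z,p,\nu,\Delta_n)_T^{1/p}-V(X,p,\nu,\Delta_n)_T^{1/p}\bigr|\le V(Y,p,\nu,\Delta_n)_T^{1/p},
\]
and for $0<p<1$ the same bound with the exponents $1/p$ deleted (subadditivity of $t\mapsto t^{p}$); both also hold with the roles of $X$ and $Y$ exchanged, and neither carries a $p$-dependent constant. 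Consequently $V(Z,p,\nu,\Delta_n)_T/V(X,p,\nu,\Delta_n)_T\to 1$ whenever $V(Y,p,\nu,\Delta_n)_T/V(X,p,\nu,\Delta_n)_T\to 0$, and symmetrically.

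Consider first $p<H^{-1}$. The locally-uniform-in-$p$ power-variation asymptotics for $\mathcal{BSS}$ processes behind the first claim give $V(X,p,\nu,\Delta_n)_T\asymp\nu^{pH}\Delta_n^{pH-1}$ up to a slowly varying factor, which diverges here since $pH<1$, together with $V(X,p,2,\Delta_n)_T/V(X,p,1,\Delta_n)_T\to 2^{pH}$ (precisely what makes $\hat H(X,p,\Delta_n)_T\to H$). From the estimates behind the second claim, $V(Y,p,\nu,\Delta_n)_T$ either diverges like $\Delta_n^{p/\beta_{\mathcal{X},T}-1}$ (when $p<\beta_{\mathcal{X},T}$) or converges to a finite positive limit (when $p\ge\beta_{\mathcal{X},T}$); in the first case $1/\beta_{\mathcal{X},T}>\tfrac12>H$ forces $\Delta_n^{p/\beta_{\mathcal{X},T}-pH}\to0$, and in the second a bounded sequence is $o(\Delta_n^{pH-1})$. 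Either way $V(Y,p,\nu,\Delta_n)_T=o\bigl(V(X,p,\nu,\Delta_n)_T\bigr)$ for $\nu\in\{1,2\}$, so $V(Z,p,\nu,\Delta_n)_T/V(X,p,\nu,\Delta_n)_T\to1$. Plugging this into \eqref{eq:def_asf},
\[
\hat H(Z,p,\Delta_n)_T-\hat H(X,p,\Delta_n)_T=\frac{1}{p\log 2}\log\frac{V(Z,p,2,\Delta_n)_T/V(X,p,2,\Delta_n)_T}{V(Z,p,1,\Delta_n)_T/V(X,p,1,\Delta_n)_T}\overset{\mathbb{P}}{\longrightarrow}0,
\]
and the first claim yields $\hat H(Z,p,\Delta_n)_T\overset{\mathbb{P}}{\longrightarrow}H$.

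Next consider $p>H^{-1}$, so $p>2>\beta_{\mathcal{X},T}$ as well. Now $V(X,p,\nu,\Delta_n)_T\asymp\nu^{pH}\Delta_n^{pH-1}\to0$ since $pH>1$, whereas the jump analysis within the second claim gives $V(Y,p,\nu,\Delta_n)_T\to\nu\sum_{s\le T}|\Delta Y_s|^{p}>0$: once $\Delta_n$ is small, each jump of $Y$ lies in exactly $\nu$ of the overlapping increments, and the drift and small-jump contributions are negligible because $p>\beta_{\mathcal{X},T}$ (with $p>1$, so the drift term is of order $\Delta_n^{p-1}$). Since $X$ is continuous, $\Delta Z_s=\Delta Y_s$, and as $p\ge1$ the $\ell^{p}$ inequality forces $V(Z,p,\nu,\Delta_n)_T\to\nu\sum_{s\le T}|\Delta Y_s|^{p}$ as well; hence $\log V(Z,p,2,\Delta_n)_T-\log V(Z,p,1,\Delta_n)_T\to\log2$, and $\hat H(Z,p,\Delta_n)_T\overset{\mathbb{P}}{\longrightarrow}\log2/(p\log2)=1/p$.

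Finally, these pointwise statements must be upgraded to uniformity in $p$ on compact subsets of $(0,H^{-1})$ and of $(H^{-1},\infty)$. Granting the locally-uniform-in-$p$ versions of the $\mathcal{BSS}$ power-variation limits (first claim) and of the jump-sum approximation (second claim), the $\ell^{p}$ inequalities transfer uniformity at no cost, since they involve no $p$-dependent constants and the dominating quantities $2^{pH}$ and $\Delta_n^{pH-1}$ are monotone in $p$, so their suprema over a compact are attained at its endpoints. I expect the main obstacle to be the behaviour near the crossover $p=H^{-1}$: there $V(X,p,\nu,\Delta_n)_T$ and $V(Y,p,\nu,\Delta_n)_T$ are both $O(1)$ with the same $\nu$-scaling $\nu^{1}$ (for $V(X)$ because $pH=1$ there), so no single term strictly dominates and the convergence rate degenerates, even though the pointwise limit of $\hat H(Z,p,\Delta_n)_T$ stays continuous across $H^{-1}$ (equal to $H=1/H^{-1}$ from both sides). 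Strict dominance — and hence local uniformity — holds on $(0,H^{-1})$, which contains $(0,2)$ since $H^{-1}>2$, and on $(H^{-1},\infty)$, so the stated convergence and its local uniformity follow from the transfer above.
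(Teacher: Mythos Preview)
Your proposal addresses only part~3, taking parts~1 and~2 as given; the paper proves those as direct corollaries of a general proposition (scaling limits for $V(\mathcal X,p,1,\Delta_n)_T$ imply convergence of $\hat H$) combined with the power-variation limit theorems of Corcuera et al.\ and Todorov--Tauchen, so nothing substantive is missing there.

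For part~3 your argument is essentially the paper's own. Both use the same triangle/Minkowski bounds
\[
\bigl|V(Z,p,\nu,\Delta_n)_T^{1/p}-V(X,p,\nu,\Delta_n)_T^{1/p}\bigr|\le V(Y,p,\nu,\Delta_n)_T^{1/p}\quad(p\ge1),
\]
and the subadditive version for $p<1$, and both compare the scaling exponents $\Delta_n^{pH-1}$ versus $\Delta_n^{p/\beta-1}$ (using $H<\tfrac12<1/\beta$) to show the jump part is negligible for $p<H^{-1}$ and the continuous part is negligible for $p>H^{-1}$. Your observation that each jump appears in exactly $\nu$ overlapping increments is precisely the content of the paper's preliminary lemma, which splits $V(\mathcal X,p,2,\Delta_n)_T$ into even- and odd-indexed non-overlapping sums to obtain the limit $2\Xi$. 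The only organizational difference is that you compute $\hat H(Z)-\hat H(X)$ directly via log-ratios, whereas the paper first establishes the power-variation limits for $Z$ (its Lemma~A.2) and then feeds them into its general proposition; the substance is the same.

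Your remark on the crossover at $p=H^{-1}$ is well taken: the paper's uniformity argument (tightness via a modulus-of-continuity bound) is likewise carried out separately on $\{p<H^{-1}\}$ and $\{p>H^{-1}\}$, so neither proof resolves uniformity across $H^{-1}$, and the stated interval $(0,2)\cup(2,\infty)$ in the theorem appears to be an imprecision rather than something either argument actually delivers.
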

The pointwise convergence are derived using a similar technique as in \cite{Woener2011}, by making use of the convergence results for power variation derived in \cite{Todorov2010} and \cite{Corcuera2013}. The proofs are contained in Appendix \ref{appendix:proofs}.

Plotting the roughness signature function over different values of $p$ will give useful visual tool for determining the components of the process we are observing. When looking at low powers, the level of the roughness signature function will give the recirpocal of the activity level of the most active component. If this level is $1/2$, the most active component is a continuous martingale\footnote{This follows by the results of \cite{Todorov2010}}. If it is lower, the process contains a continuous rough part, where the level of the signature will be the Hurst roughness exponent of the process. Finally, if the level is higher than $1/2$, the process does not contain a continuous component, but is instead a pure-jump process with Blumenthal-Getoor index equal to the reciprocal of the level of the signature. The curve will be flat until reaching the activity level. If the process contains jumps, it will have a kink at the activity level and have slope $1/p$ afterward; otherwise, it will stay flat. This is illustrated in Figure \ref{fig:rsf_illustration}. 
\begin{figure}[H]
        \centering
        \includegraphics[scale=0.3]{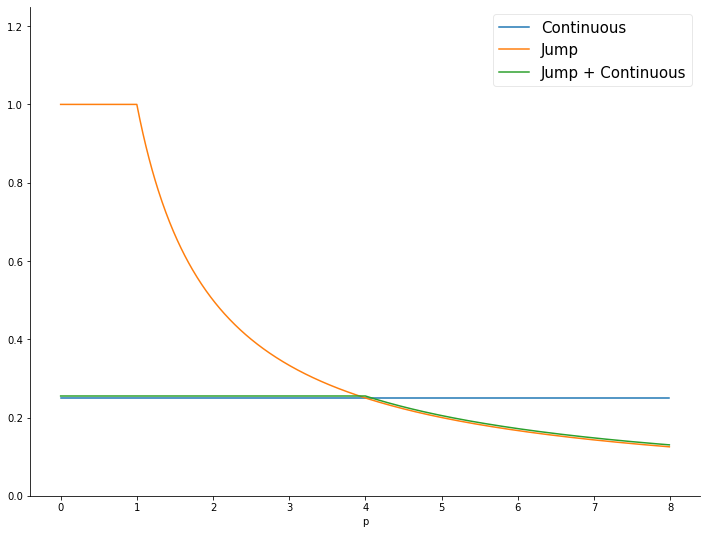}

    \caption{Illustration of examples of the roughness signature function. Continuous process with $\alpha=-0.25$, pure-jump process with activity index $1$, and continuous process plus jumps.}
\end{figure}\label{fig:rsf_illustration}
This is the tool that we will use for analysing our processes. When working with empirical data, we will typically have data for multiple time periods, e.g. data from multiple days or years, which can potentially contain jumps, and are hence susceptible to outliers. To deal with this, instead of looking at each separate roughness signature by itself, we will follow \cite{Todorov2010} and \cite{TodorovTauchen2011} and consider the quantile roughness signature function:
\begin{equation}
    H_q(\mathcal{X},p,\Delta_n)_T =\operatorname{Quantile}_q(\{H(\mathcal{X}_t,p,\Delta_n)\}_{t=1}^T)
\end{equation}
where we use the notation $H(\mathcal{X}_t,p,\Delta_n)$ to denote the roughness signature function for the observations of $\mathcal{X}$ over the period $t-1$ to $t$, while $\operatorname{Quantile}_q$ returns the $q$'th quantile of the input vector. In practice, we will be considering $q\in\{0.25,0.5,0.75\}$ to get a measure of the central tendency as well as the spread of the signature functions.

\section{Simulation Results}\label{sec:sim_study}

In this section, we will examine the finite-sample behaviour of the roughness signature function. To do this, we will simulate examples of $\mathcal{BSS}$-processes and of pure jump processes. As our example of a rough process, we will consider the gamma-$\mathcal{BSS}$ process. This is the process defined by \eqref{eq:cont_model}, where the kernel function is given by 
\begin{equation}\label{eq:gamma_BSS}
    g(x)=x^\alpha\exp(-\lambda x),\quad x>0,\;\alpha\in(-1/2,1/2),\;\lambda>0
\end{equation}
to simulate this process, we will use the hybrid scheme introduced by \cite{BLP2017}. We consider two different rough specifications, $\alpha=-0.25$, corresponding to moderate roughness, and $\alpha=-0.4$, corresponding to very rough sample paths. In addition, we will consider the case $\alpha=0$, which puts us in the semimartingale case of \cite{Todorov2010}---in fact, this will exactly result in a standard Ornstein-Uhlenbeck process. For simplicity, we will  take $\sigma_1\equiv1$, which means we will have no stochastic volatility, and the process will be Gaussian. 

For the jump processes, we will both consider a Poisson process, which is of finite activity and hence has Blumenthal-Getoor index 0, and a tempered stable process, where a parameter controls the index. 

The tempered stable process is a Lévy process with Lévy measure of the form 
\begin{equation}
    \nu(dx)=\begin{cases}
c_{1}e^{-\lambda_1x}x^{-1-\beta}dx & \text{on }(0,\infty)\\
c_{2}e^{-\lambda_2\vert x\vert}\vert x\vert^{-1-\beta}dx & \text{on }(-\infty,0)
\end{cases}
\end{equation}

Where $\lambda_1,\lambda_2>0$ are positive parameters, while $\beta\in(0,2)$ is a parameter determining the Blumenthal-Getoor index. 

For simulation of the tempered stable processes, we use the shot-noise representation derived in \cite{Rosinski2007}, and truncate the series at a large number. For each process, we simulate $R=500$ replications of the process over $1$ time period with $n\in\{500,2000\}$ observations per time period. We plot the median roughness signature over the replications as well as the 25- and 75th percentile in the sample, to get a feeling for the variation among the samples. 

First, we consider the purely continuous gamma-$\mathcal{BSS}$ process. Here, we consider $\alpha\in\{-0.25,-0.4\}$, corresponding to Hurst roughness index $H\in\{0.1,0.25\}$. Here, we get the following. 

\begin{figure}[H]
    \centering
    \includegraphics[scale=0.3]{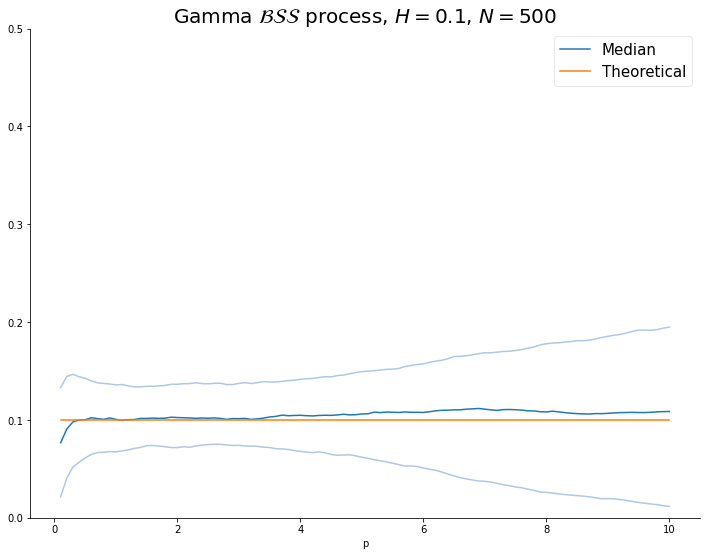}     \includegraphics[scale=0.3]{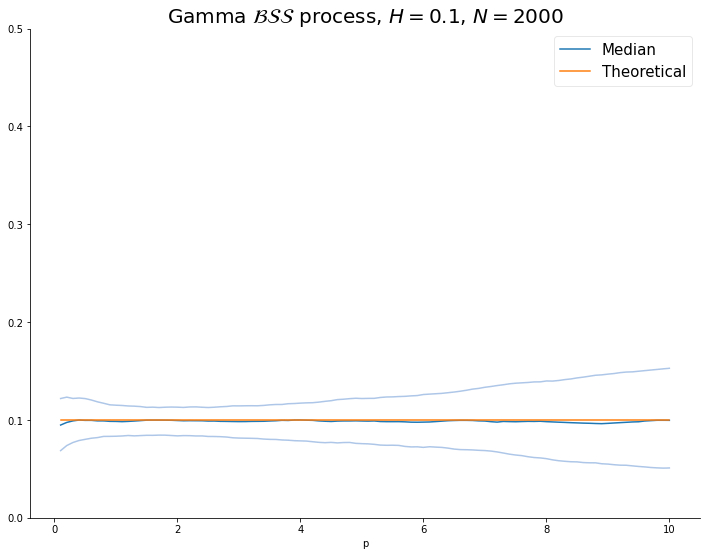}
    
    \includegraphics[scale=0.3]{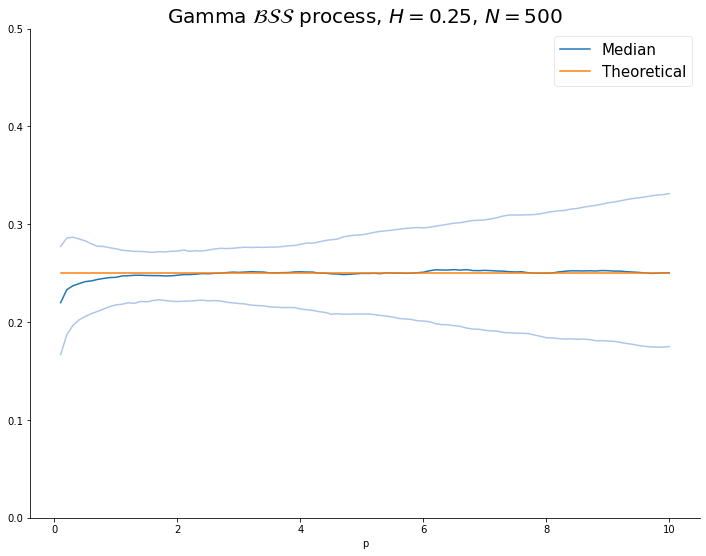}     \includegraphics[scale=0.3]{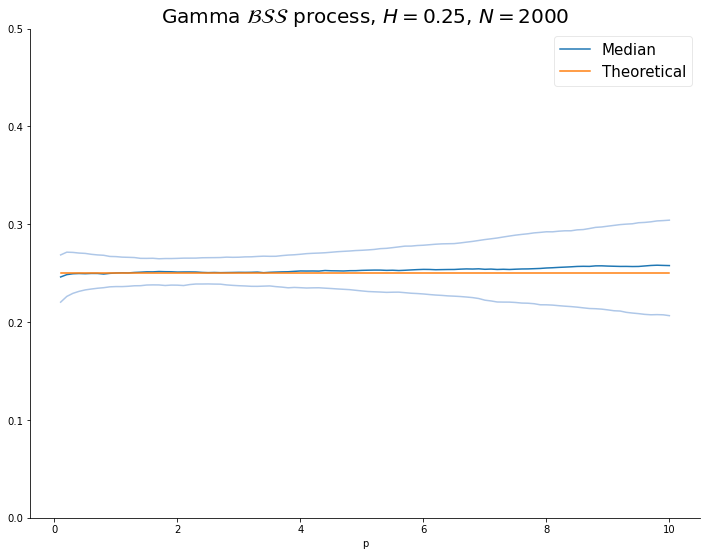}
\caption{Roughness signature functions for the gamma $\mathcal{BSS}$ process with roughness index $\alpha\in\{-0.25,-0.4\}$. The light-coloured blue lines are the 25- and 75th percentile, the hard blue line is the median signature function while the orange one is the theoretical limit.}
    \label{fig:quasf_ts_cof}
\end{figure}
We see that the roughness signatures behaves almost exactly as we would expect them to based on the asymptotic results. We also see that we do not need a huge sample for the effect to show.

Next, we consider the pure jump process, where we consider $\beta\in\{0.5,1,1.5\}$. 
\begin{figure}[H]
    \centering
    \includegraphics[scale=0.3]{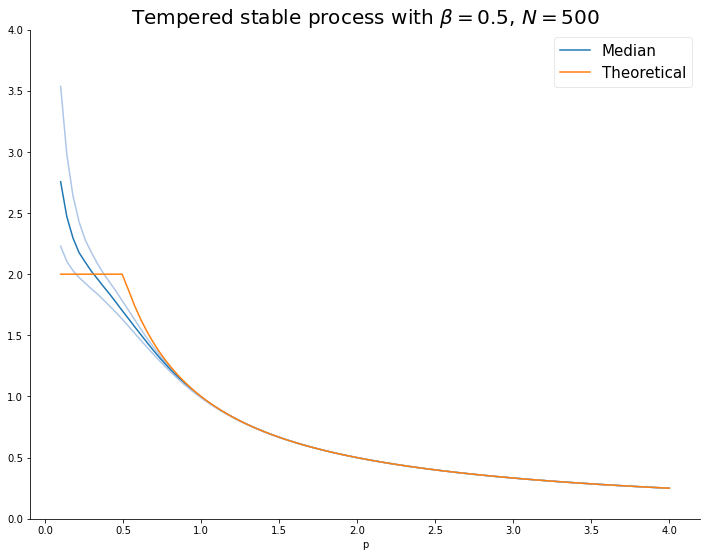}\includegraphics[scale=0.3]{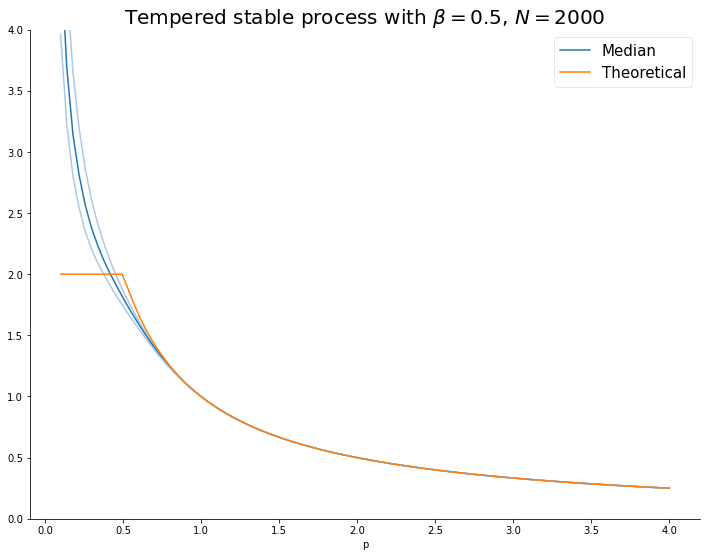}   
    
    \includegraphics[scale=0.3]{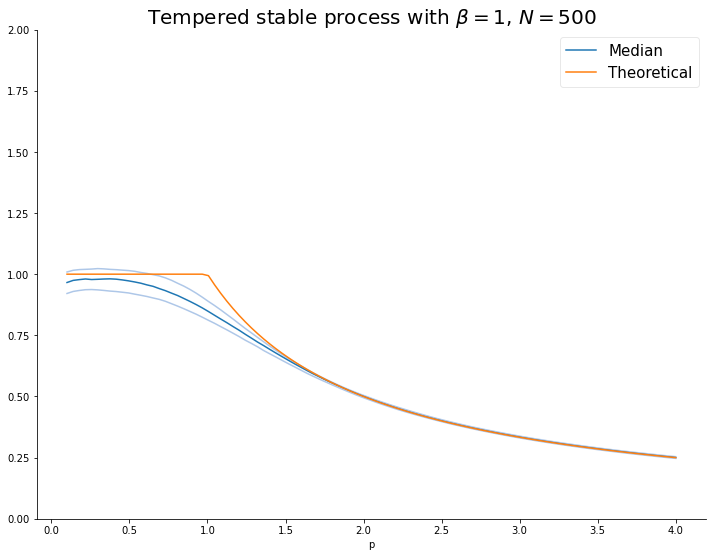} \includegraphics[scale=0.3]{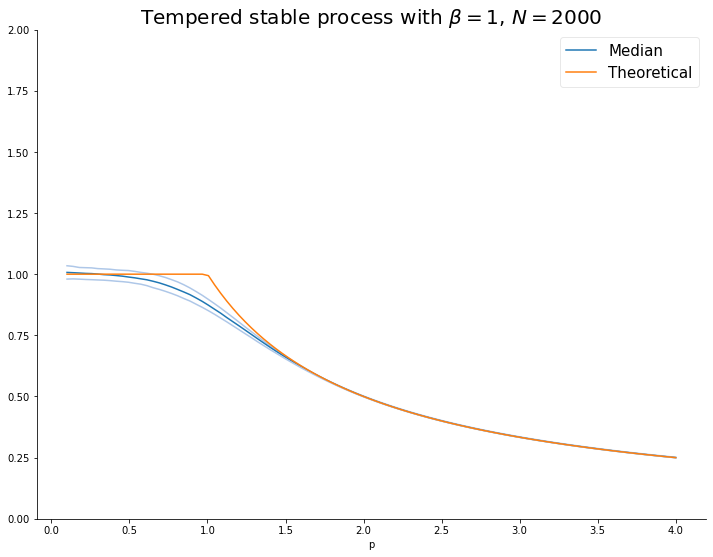}
    
    \includegraphics[scale=0.3]{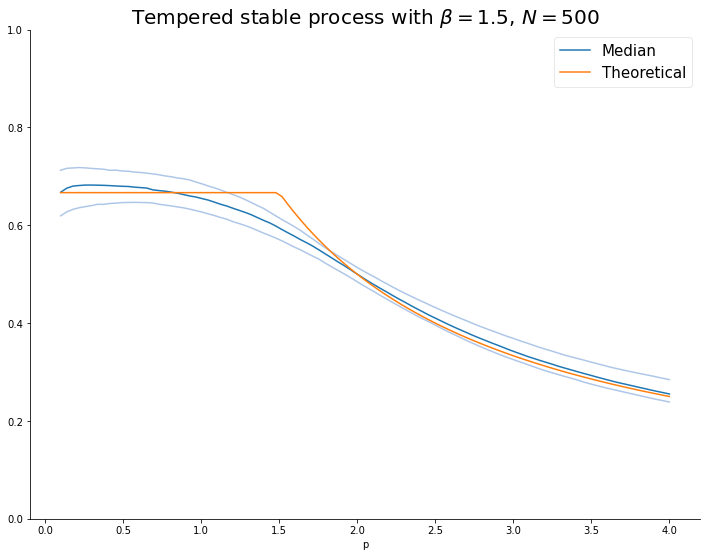}
    \includegraphics[scale=0.3]{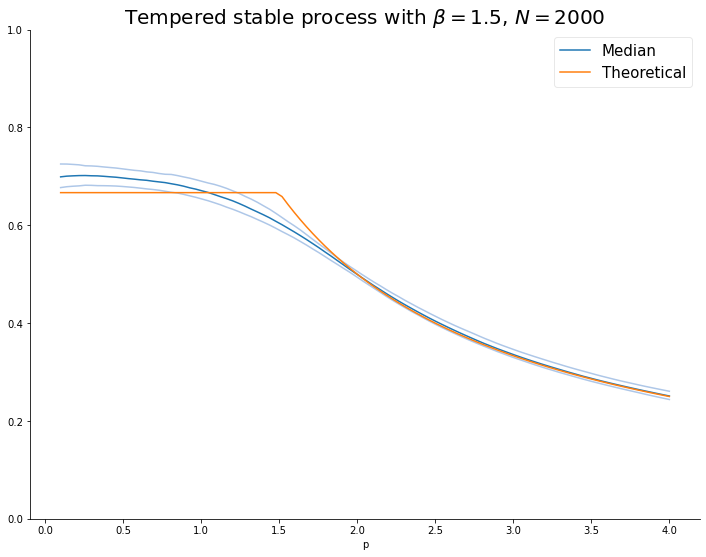}   
\caption{Roughness signature functions for the tempered stable process with Bluementhal-Getoor index $\beta\in\{0.5,1,1.5\}$. The light-coloured blue lines are the 25- and 75th percentile, the hard blue line is the median signature function while the orange one is the theoretical limit.}
    \label{fig:quasf_ts_cof}
\end{figure}
We see that the roughness signature function don't quite get the sharp cut-off at the Blumenthal-Getoor index. This is expected, as we are working with a finite sample, and the functionals we are considering are naturally smooth in finite samples. The performance does however to improve as we increase the number of observations in the time period. Hence, the simulations seems to confirm the asymptotic results.

Finally, we want to consider a combination of a continuous process and a pure-jump process. Here, we consider the gamma $\mathcal{BSS}$ processes with $\alpha=-0.25$ from above, but add a jump process. For the asymptotic results, the choice of jump process should not matter, so we simply add Poisson jumps to the process. The result is the following: 

\begin{figure}[H]
    \centering
    \includegraphics[scale=0.3]{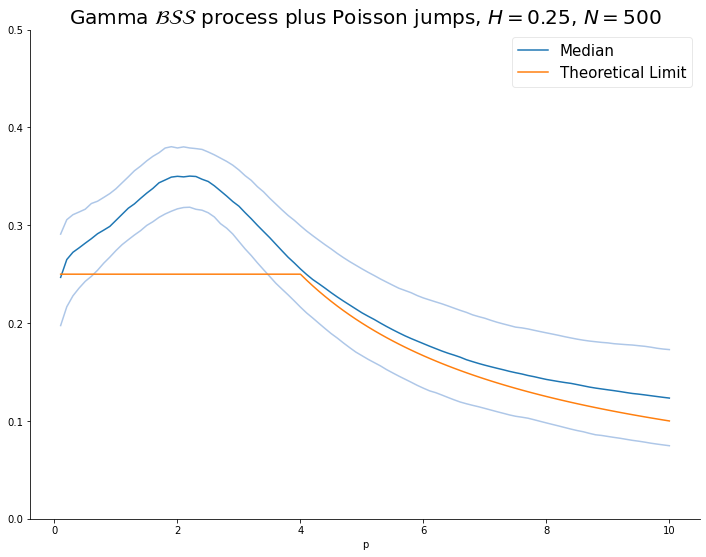}
    \includegraphics[scale=0.3]{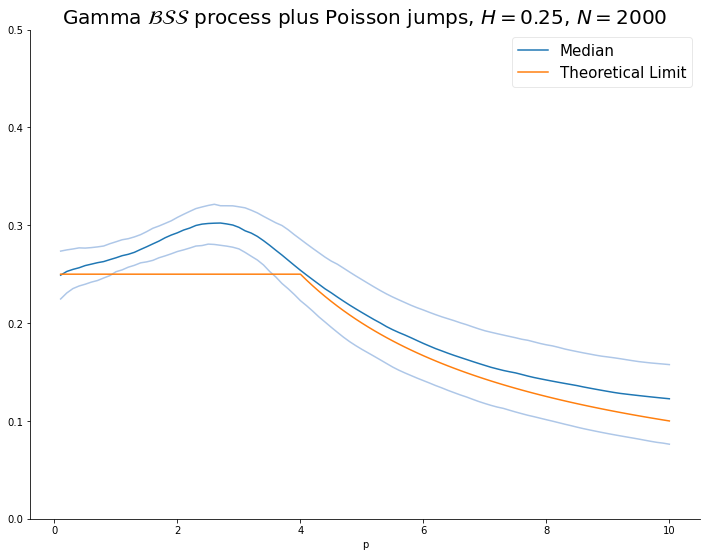}
\caption{Roughness signature functions for the gamma $\mathcal{BSS}$ process with roughness index $\alpha =0.25$ with rare Poisson jumps. The light-coloured blue lines are the 25- and 75th percentile, the hard blue line is the median signature function while the orange one is the theoretical limit.}
    \label{fig:quasf_ts_cof}
\end{figure}
Here, we fit is not perfect, but it does seem to improve as $n$ increases. We could also do the same for the rougher gamma $\mathcal{BSS}$ process, but here we would have to look at powers higher than $10$ to see the "kink" indicating jumps, and at these very high powers, the roughness signature function needs a much higher number of observations to approach its theoretical limit.  

Overall, the roughness signature function seems to work as we would expect from the asymptotic results. 

\section{Empirical Application}\label{sec:emp_application}

\subsection{Different Volatility Measures}
Through Theorem \ref{thm:consistency} and the simulation study, we have now established that the roughness signature function works as expected on rough processes and processes with jumps. We are thus ready to turn to an empirical application of the signature function. As mentioned in the introduction, our primary motivation for discerning rough processes from jumps processes and measuring the activity or roughness is to examine the processes driving volatility in the financial markets. This is interesting because it will allow us to determine if volatility is rough, as proposed by \cite{Gatheral2018}, and whether, in addition, there is a jump component present. A challenge when working with volatility is its inherent latent nature. There is no direct way to observe volatility, and we, therefore, need to choose which volatility measures to use to examine this question. We will examine if volatility is rough by plotting the estimated roughness signature function for three different volatility measures and examining whether the shape and level imply a rough process. As we have studied in the simulations in Section \ref{sec:sim_study}, we would expect a flat curve at a level lower than $1/2$ if the hypothesis of rough volatility is true.

\subsection{The Data}\label{section:data}

We examine three different measures of volatility: realized volatility (RV), the CBOE volatility index (VIX), and the recently introduced option-extracted spot volatility (OEV) of \cite{Todorov2019}.

The three series used in this paper are all measures of the volatility of the S\&P500 index. They all contain daily measures of the volatility for all trading days, which amounts to an average of 252 observations per year. 

The data on VIX and OEV is obtained from TailIndex.com. They provide daily observations from the start of 2008 to the end of 2020, resulting in a total of 13 years of data. The first three years of data, $2008$ through to $2010$, are calculated based on regular S\&P500 index options, while all data onward are calculated based on weekly S\&P 500 index options. 

The data on realized variance was obtained from the Oxford-Man Institute's realized library\footnote{https://realized.oxford-man.ox.ac.uk/}. The website computes a number of non-parametric volatility estimators on more than 30 stock indices, including the S\&P500. We use the realized variance, 5 min subsampled data, which gives daily observations on all trading days from the start of the year 2000 up to the present day. We capped our data at the end of December 2021 to only have full years in our sample. In Figure \ref{fig:time_series_of_data} we have provided plots of the three time series. 

\begin{figure}[H]
    \centering
    \includegraphics[scale=0.25]{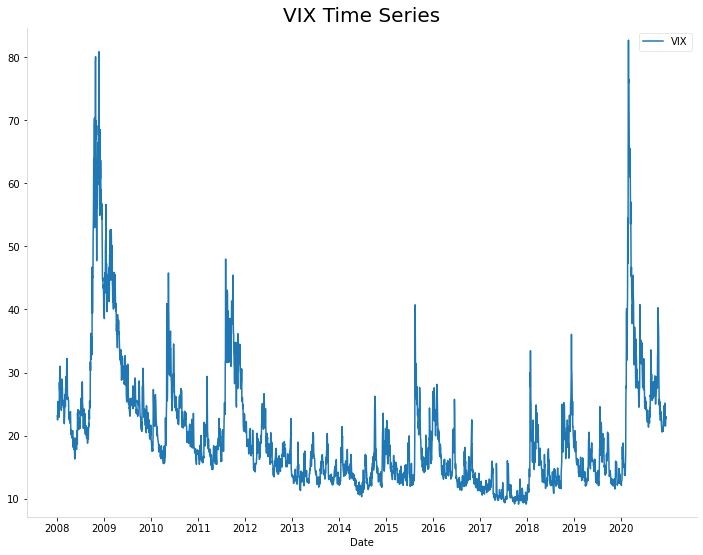}\includegraphics[scale=0.25]{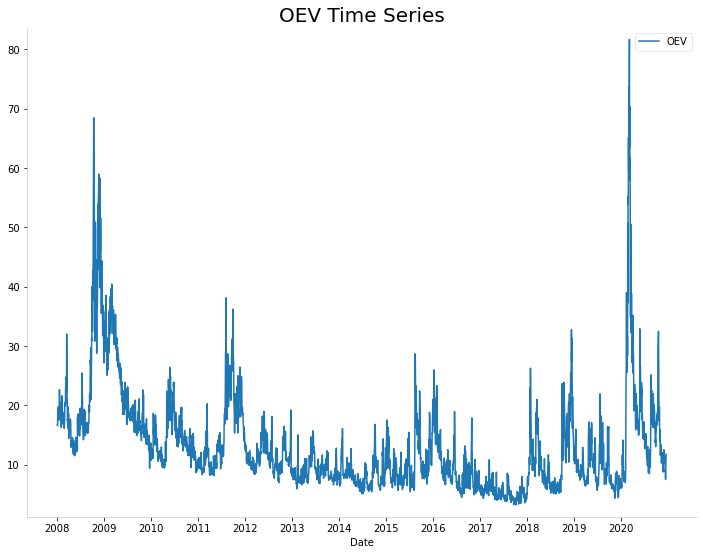}
    \includegraphics[scale=0.25]{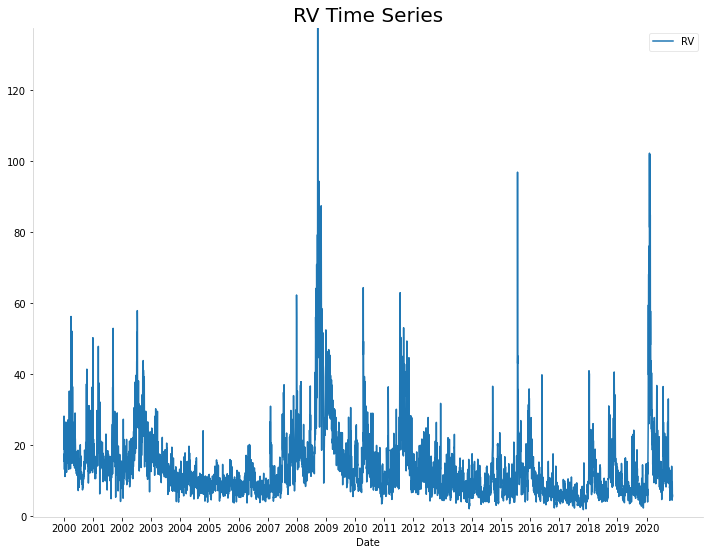}\includegraphics[scale=0.25]{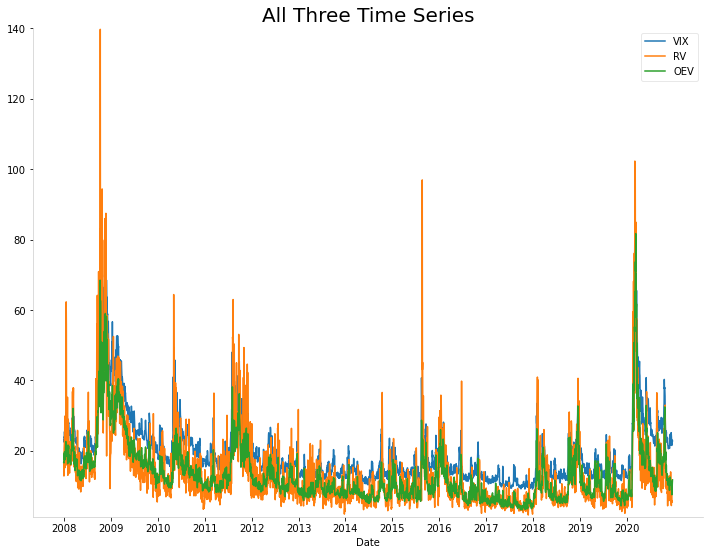}
\caption{Time series plot of the three volatility series. The top plots and bottom left plot are the full individual series, the last one is the three series in the time they overlap.}
    \label{fig:time_series_of_data}
\end{figure}

We see that the three measures are quite different. Firstly, we note that the VIX has a higher mean value than the other two assets. This might be because the VIX is a measure of what the market expects the future volatility to be rather than a measure of current/past volatility. It could be the case that investors on average expect more volatility than the average volatility level. Secondly, we note that realized variance seems to be a much rougher or noisier measure than the other two, which are evident in the bottom right plot of Figure \ref{fig:time_series_of_data}. This can also be seen in the descriptive statistics, as the RV has a higher standard deviation and higher kurtosis than the OEV series. OEV being a less noisy measure than RV is also consistent with the findings in \cite{Todorov2019}.

\subsection{Empirical Roughness Signature Functions}\label{section:empirical_asf}

We are now ready to examine the empirical roughness signature functions. For this, we will be computing the roughness signature functions using each year as one time period and will then be plotting the median as well as the $25$ and $75$ percent quantiles of it as explained in Section \ref{sec:ASF}.

In \cite{Todorov2010}, they restricted their attention to semimartingales, and as a result, the range $p\in(0,4)$ would always contain all the information on the process, as the maximum activity level of these processes is $2$. As we have seen above, when working with processes that are rougher than a Brownian motion, the activity level can be arbitrarily high, and we, therefore, need to examine higher powers. In the following, we have illustrated the roughness signature functions for $p\in(0,10)$.\footnote{We also examined the signature functions for higher powers for the realized volatility series, which looks the most rough, but found no further information in the higher powers and will thus not include these plots in this paper.}  

\begin{figure}[H]
    \centering
    \includegraphics[scale=0.3]{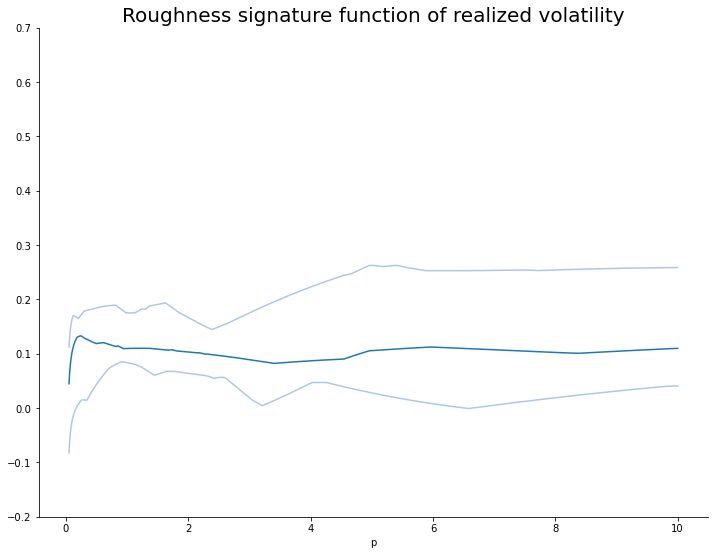}
    \includegraphics[scale=0.3]{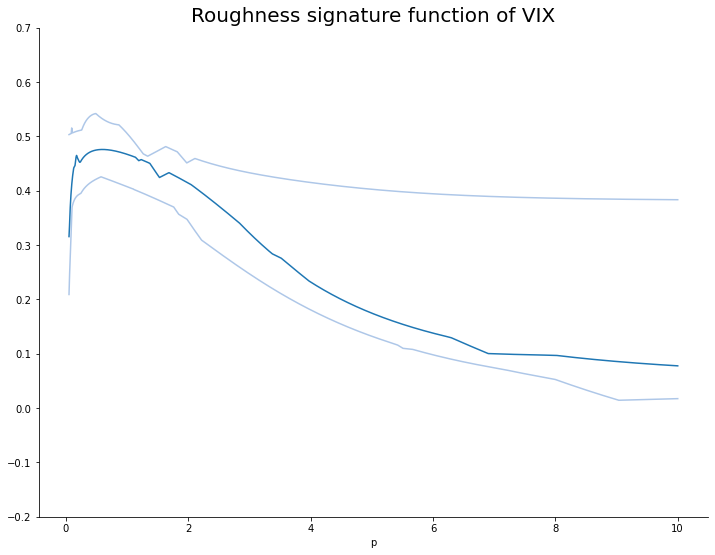}
    \includegraphics[scale=0.3]{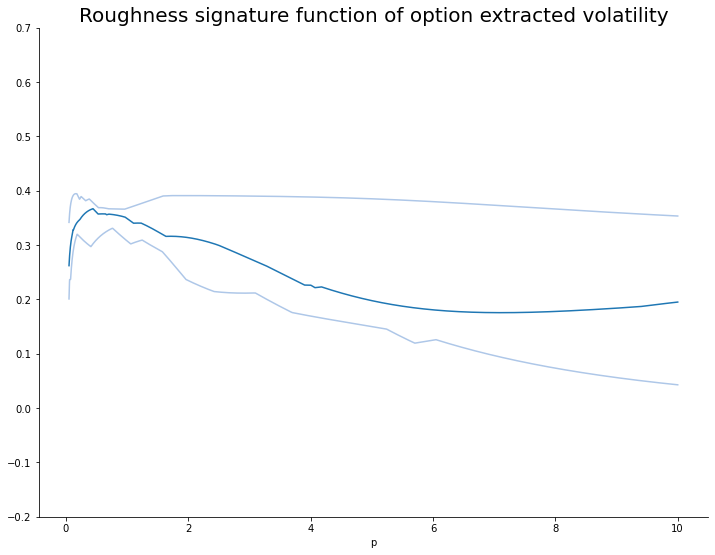}
\caption{Empirical quantile roughness signature functions for the three volatility measures. The curve in the dark shade is the median, while the curves in the light shade are the 25- and 75\% quantiles.}
    \label{fig:empirical}
\end{figure}

We see that there are quite big differences between the roughness signature functions of the three series. If we first focus on the realized volatility, a measure used in previous studies of the roughness of volatility, we see that we get exactly the results we would expect. The median curve is relatively flat at $0.1$, which is pretty close to the former values of roughness observed in the literature. If jumps were present, the curve would gain a slope past $p=10$, but as stated above, we did not find this during the analysis. If we consider the OEV, we see that the series also shows signs of roughness but much less than the realized volatility. This is most possibly because RV is a noisier measure than the OEV. A Hurst exponent of $0.5$, corresponding to an activity level of $2$ (i.e. a continuous semimartingale) is however still outside quartiles. We note that the roughness signature function is not quite as flat as for the RV series, but it is still far from being as sloped as it would be in the presence of jumps. We, therefore, believe the slightly curved shape is simply due to sheer randomness, seeing as our sample only contains $13$ years of daily data. 

If we then turn to the roughness signature function of the VIX, we see that it does not imply roughness, unlike the roughness signature functions of the two other measures. In fact, the plot seems very consistent with the inverse roughness signature function of a continuous martingale plus jumps. To see this, we have added the theoretical roughness signature function alongside the VIX plot: 

\begin{figure}[H]
    \centering
    \includegraphics[scale=0.3]{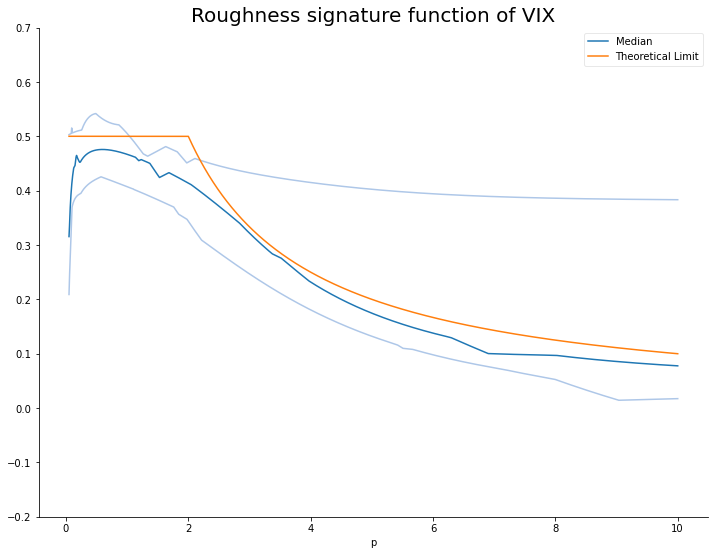}
\caption{Empirical roughness signature functions for the VIX, with the theoretical roughness signature function of a continuous martingale plus jumps imposed. The curve in the dark shade is the median, while the curves in the light shade are the 25- and 75\% quantiles. The orange curve is the theoretical limit of a continuous semimartingale with jumps.}
    \label{fig:vix_cm}
\end{figure}

We see that this fits very well, with the theoretical curve almost entirely within the quartiles for the roughness signature function. This suggests that while realized volatility is rough, as we have seen above, the VIX could actually be smooth, and while the spot volatility contained no evidence of jumps, the VIX does. 

We have seen evidence of the realized volatility and the option-extracted volatility being rough and will now compute point estimates of the level of roughness. We will do this using a point estimator inspired by the activity point estimator of \cite{Todorov2010}. The idea is to use that the roughness signature function stays flat up $p=H$, where $H=\frac{1}{1/2+\alpha}$ is the Hurst roughness exponent. Furthermore, we know that this convergence is uniform. This suggests that for a sufficiently small $\tau$, the following estimator should be consistent. 
\begin{equation} \label{eq:point_est_H}
    \hat{H} = \frac{1}{\hat{H}^{-1}(X,\tau,\Delta_n)-\tau}\int_{p=\tau}^{p=\hat{H}^{-1}(X,\tau,\Delta_n)}\hat{H}(X,p,\Delta_n)dp
\end{equation}

For this to work, $\tau$ needs to be a small number, less than the inverse of the Hurst exponent of the process. Given the low level of the plots and the slightly irregular behaviour near zero, we use $\tau=0.5$. For the realized volatility, we get $\hat{H}=0.09$, while for the OEV we get $\hat{H}=0.39$. We see that these point estimates confirm what the initial visual inspection of the roughness plots suggested.


Up to this point in the analysis, we have analyzed the roughness of the volatility series directly. In most applications, it is not assumed that volatility itself is modelled as a rough process such as a $\mathcal{BSS}$-process, but rather that the logarithm of volatility is. This is what was done in the seminal paper by \cite{Gatheral2018}, as well as in most followup papers including \cite{Bayer2016a}, \cite{Bennedsen2021}, \cite{BCSV2020}, and \cite{BennedsenEtAl2023}. This is done to ensure that the volatility process stays positive. From Theorem 2.1 of \cite{Bennedsen2021} we do, however, know that under some assumptions, the volatility process will inherit the roughness of the log-volatility process, and the results should therefore be the same. We wish to confirm that this is the case, so we have carried out an identical analysis to the one above for the log of each volatility series. The resulting roughness signature functions look very similar to the ones for the original series, and the point estimates also agree nicely with the ones from the original volatility series, confirming that the volatility series inherits the roughness of the log-volatility series. Due to the similarity, we have decided not to report these results. 

\subsection{Discussion}\label{section:discussion}

From the analysis above, we have two main takeaways. Volatility is rough, but when it is extracted from the options market using the spot volatility estimator of \cite{Todorov2019}, it looks smoother than when extracted from returns using realized volatility. Further, despite volatility being rough, the VIX index behaves as a smooth process with jumps. This section will consider potential explanations of these two somewhat surprising findings and considers which of the three series are best suited for measuring the roughness of volatility. 

One reason why the realized volatility series displays rougher paths than the option extracted volatility might be due to noise in the observations. It is well known when we sample high-frequency returns, what we observe is generally not increments of the efficient price process, but it will instead be contaminated by microstructure noise. This would typically be modelled as an additive term in the log-price. We know that realized variance is not a noise-robust volatility measure. If microstructure noise is present, we expect that this will affect our estimates, as it has been documented by \cite{Bennedsen2020} that noise will bias smoothness downwards, i.e. it makes data look rougher than it is. As mentioned in an earlier section, the estimator used in that article is an alternative version of the \cite{Corcuera2013} estimator, and we would therefore expect it to affect our estimator similarly. The issue of measurement noise was also considered in the web-available appendix of \cite{Todorov2010} in the semimartingale setting. Here, they show that when additive noise is present, the activity signature function will diverge to infinity no matter the underlying model. The intuition is that the measurement errors will become the most active part of the process, as their magnitude is independent of the sampling frequency. In our setting, where the smoothness is the inverse of the activity, this will also corresponds to the data looking rougher than it actually is. One way to solve this would be to use a noise robust measure of volatility, e.g. the realized kernel estimators of \cite{Barndorff-Nielsen2008}.

However, as we have seen above, the realized variance does not converge to the spot volatility but rather to the integrated volatility over the given day. We know that integration is generally a smoothing operation, and hence we would expect the realized variance series - if it was measured without any noise - to look smoother than the true spot volatility series. This problem was also considered by \cite{Gatheral2018} who showed through simulation and analytically in a very simplified model that proxying spot volatility by its integrated value will lead to a substantial bias. They find that the longer the integration is done over, the bigger the bias. In the literature, we have seen two ways the smoothing problem can be solved. First, \cite{Bennedsen2020} considers ways to handle the noise present in high-frequency data and presents a noise-robust estimator of the roughness. This approach is highly relevant to us. If the roughness signature functions could be made noise-robust, it would be possible to use intraday volatility measures rather than the daily ones. Thus, we would no longer be integrating spot volatility over an entire day, and would expect the smoothing effect to diminish. Secondly, \cite{Fukasawa2019} and \cite{BCSV2020} develop methods for estimating the Hurst exponent where the dynamics of the integrated variance is used directly in place of the spot volatility dynamics, thus avoiding the bias from using the wrong dynamics when working with RV. Overall, for the realized variance, we thus have two sources of error pulling the roughness estimate in different directions. 

There are also (at least) two sources of error when using the option extracted volatility for measuring roughness. The first is that in the construction of the estimator, one of the key assumptions when deriving the convergence is that the spot volatility is a semimartingale process. There is an apparent inconsistency in using an estimator of volatility that builds on an assumption of smooth volatility for measuring the roughness of volatility. What we are really investigating is then whether volatility extracted using a Brownian model will still exhibit roughness. Ideally, it should be investigated whether this estimator is also valid in a rough volatility setting, which would allow us to interpret this as a proxy for the true spot volatility. The second source of error lies in the remaining time to maturity in the options used to span the characteristic function of the returns, and thus to extract the volatility. We know that under some assumptions, the estimator will converge to the spot volatility if the time to maturity and mesh of the strike grid goes to zero. In practice, we are not using options with zero time to maturity but instead estimating the volatilities using on prices of weekly options, typically with 2-5 days till maturity. This remaining time to maturity could be the reason for the smoothness. \cite{LivieriEtAl2018} have previously studied the roughness of option extracted volatility using the simpler Black-Scholes implied volatility and the Medvedev-Scaillet correction formula applied to implied volatilities of short-term at-the-money implied volatility \citep{Medvedev2011} and found estimates of the Hurst exponent right around $0.3$, very similar to our findings. They argue that this higher estimate of the Hurst exponent is due to a smoothing effect from the remaining time to maturity in the options they use for extracting the volatility. They illustrate the smoothing effect using a simulation study and derive it analytically in a very simplified setup. 

Inspired by their findings, we will also carry out a simulation experiment to study how the roughness of the volatility estimator of \cite{Todorov2019} behaves when the underlying volatility is rough, but the options used have non-zero time to maturity. We consider the same simple model, where volatility is modelled as an exponential fractional Brownian motion, while returns are driven by a Brownian motion. Our model is then
\begin{equation}
\begin{split}
    d\log S_t &= \sigma_tdZ_t \\
    d\log\sigma_t&=\eta dW^H_t
\end{split}
\end{equation}

Where $Z_t$ is a standard Brownian motion and $W^H$ is a fractional Brownian motion with Hurst exponent $H\in(0,1/2)$, independent of $Z_t$. We generate $M=1000$ paths of this for $T=1000$ time periods with $H=0.04$ to follow \cite{LivieriEtAl2018} and to be consistent the findings in literature that accounts for the upward-bias when using integrated variance rather than spot variance. We condition on the volatility up to time $0$, and then use these to price call options for a range of strikes and time-to-maturities. This leaves us with a time series of option prices for a range of strikes for each maturity. We can then compute a time series of the OEV volatility estimates for each maturity. We do this for $\tau\in\{1,2,3,4,5,7,10\}$ days to maturity of the options and then estimate the roughness using both the estimator (\ref{eq:point_est_H}) from earlier, where we start the integration at $0.1$, and the roughness signature function evaluated in $p=2$. We repeat this $10$ times and plot the mean estimates of $H$. 

The resulting estimates of $H$ are the following: 
\begin{figure}[H]
    \centering
    \includegraphics[scale=0.25]{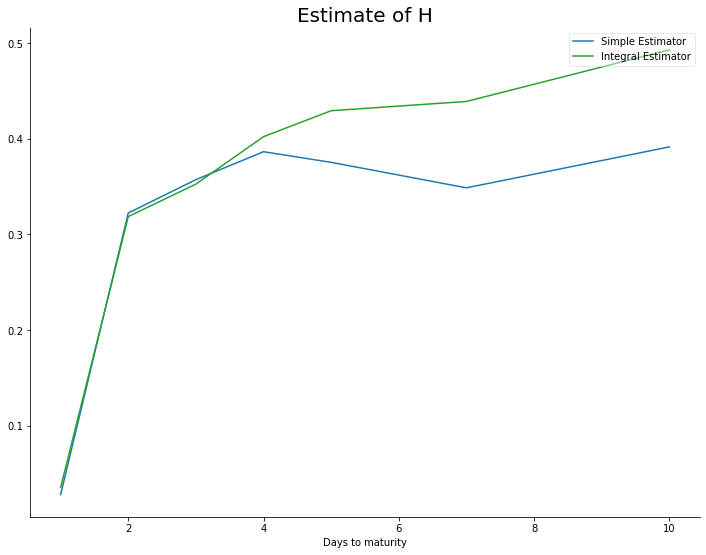}
\caption{Estimates of the Hurst exponent from the option implied volatility of \cite{Todorov2019} for different option maturities of simulated option prices. The simple estimator is the roughness signature function evaluated in $p=2$, while the integral estimator is (\ref{eq:point_est_H}) using $\tau=0.1$.}
    \label{fig:H_est_oev_sim}
\end{figure}

We see that in this simple model, we get very high estimates of $H$ despite the true volatility series being very rough, which supports the hypothesis that the OEV looks smoother than the underlying volatility. The days to maturity of the options used for our empirical OEV is between $2$ and $5$ days, and we see that our estimates of $H$ between $0.3$ and $0.4$ are quite consistent with the results of this simulation. We note that the simulation is done in an extremely simplified setting, where there are no jumps in returns, no leverage effect, and where volatility is a non-stationary process, but we still see it as a strong indicator of the reason for the smoothness. Between realized volatility and option extracted volatility, we thus have more faith in the estimated roughness from the realized volatility series. 

If we turn to the VIX, there are several reasons why it could behave like a smooth process with jumps. First of all, while the other two time series are simply volatility measures, the VIX is a traded asset. While it is not possible to take a direct position in the VIX itself, futures and options are liquidly traded on exchanges and are some of the most popular instruments for taking positions in pure volatility. We know from the first fundamental theorem of asset pricing that the absence of arbitrage is essentially equivalent to the existence of an equivalent martingale measure \citep{bjoerk2009}, which means that if the VIX market is free of arbitrage, we would expect the dynamics of the VIX to be semimartingale. This is exactly what we find. The VIX being smooth thus seems a natural consequence of the existence of a market with the VIX as the underlying. 

Another potential explanation lies in the construction of the VIX. We know that the VIX is constructed such that the squared VIX replicates the time-$t$ conditional expectation of the integrated variance over the coming 30 days. By construction, a conditional expectation will always be a martingale process, no matter what we are taking the expectation of. Therefore, we would expect the squared VIX to behave like a martingale, and since the square root is a concave function, Jensen's inequality immediately yields that the VIX should be a supermartingale - and therefore, in particular, a semimartingale. The observed smoothness of the VIX might then be independent of the roughness of the underlying volatility and whether or not there is a market, but it may result from the expression we are trying to span. 

Related to the above, it is also important to remember that the VIX is not meant to directly measure spot volatility but rather measure the expected integrated variance. We know that integration is a smoothing operator, so we suspect that both the integration inside the expectation and taking the expectation will make the series look smoother than it is. The integration inside will yield a smoothing effect similar to the one discussed for realized variance, while we expect the expectation to have a similar effect to the one we saw when looking at the OEV.

\section{Conclusion}\label{sec:conclusion}
In this paper, we have introduced the roughness signature function, which is a tool that can be used to asses the components of an observed stochastic process and the regularity of its sample path. We have derived consistency of the signature function, and through an extensive Monte Carlo study, we have seen that, given high-frequency observations of some stochastic process, the function is able to distinguish between observations generated by a rough process, i.e. a continuous process with roughness index of the sample paths of degree strictly less than $0$, a pure jump process, and a combination of the two. This has been examined using tempered stable processes and Brownian semi-stationary processes. We are able to recover the smoothness or activity of the given process, and when a process is made up of a combination of the two processes we can recover the activity of the ``most active'' part, which will always be the rough process in our setup.

We have applied these roughness signature functions to three different measures of the volatility of the S\&P500 index: realized volatility, the VIX, and the option extracted volatility of \cite{Todorov2019} We have seen that the three series seem to be driven by very different processes. The realized volatility and the option extracted volatility appear to be driven by rough processes, with the realized variance being a lot rougher than the option-extracted volatility. Despite this, the VIX appears to be driven by a continuous martingale with jumps. 

The roughness estimate of the realized volatility is mostly consistent with the existing literature, although a bit smoother than some recent studies, while the option-extracted measure suggests volatility to be a lot smoother than previous studies have found. We hypothesise that the option extracted volatility appears smooth due to a smoothing effect of the remaining time to maturity of the options used. We have demonstrated this smoothing effect of the estimator through a small simulation study, showing how a very rough volatility process can look almost Brownian when extracted with the \cite{Todorov2019} estimator using options with non-zero time to maturity. We believe that the role of the VIX as the underlying asset of many traded products might make it behave like a semimartingale to exclude arbitrage. We also hypothesise that its semimartingale behaviour might be because the squared VIX is constructed to span a martingale, implying the VIX should be a supermartingale.

Going forward, we believe it would be interesting to study why we see such large differences between roughness signature functions of the three measures. Is the reason why the VIX is smooth only because the VIX acts as the underlying for a traded asset, because the squared VIX spans a martingale, or is there some other explanation? 

Given the large differences between the three different volatility measures of the same underlying, it would also be interesting to explore the roughness signature functions of other volatility measures. As we have seen, the realized volatility measure is neither robust to noise nor jumps in the price, and it would therefore be interesting to use a measure designed to handle these. While jumps are not likely to contribute a lot in our setting, as jumps in the S\&P500 are typically not very large, this could be very important if looking at e.g. individual stocks. It would also be interesting to examine the same measures as in this project but applied to different indices or different asset classes to see if the shape of the roughness signature plots is dependent on the underlying. 

Finally, in this paper, we have used daily volatility measures, some of which are based on high-frequency returns. Ideally, we would want to be able to use high-frequency measures of volatility instead - i.e. intraday volatility measures. We know that high-frequency financial data is typically contaminated by measurement noise which in the current setup will bias the estimated activity index upwards and thus bias the estimated roughness downwards. To use high-frequency observations, we would thus need to make the roughness signature function noise-robust. To do this, it would be interesting to explore the approach developed in \cite{Bennedsen2020}, where they exactly show how to tackle additive noise when estimating roughness. 

\pagebreak

{\small 
\bibliographystyle{chicago}
\bibliography{mybib-v3.bib}
}

\pagebreak
\appendix
\begin{section}{Proof of asymptotic results}\label{appendix:proofs}
Let $(\Omega,\mathcal{F},(\mathcal{F}_t)_{t\geq0},\mathbb{P})$ be a filtered probability space satisfying the usual conditions. We assume that we are observing a stochastic process $\mathcal{X}=(\mathcal{X}_t)_{t\in[0,T]}$ at the discrete time points $\{0,\Delta_n,2\Delta_n,...,n\Delta_n = T\}$. We assume that there are three potential underlying processes for generating $\mathcal{X}$, namely

The continuous model: 
\begin{equation} \label{eq:cont_model}
    X_{t}=\int_{-\infty}^{t}g(t-s)\sigma_{1s}dW_s
\end{equation}

The pure-jump model: 
\begin{equation} \label{eq:jump_model}
    Y_{t}=\int_{0}^{t}b_{2s}ds+\int_{0}^{t}\int_{\mathbb{R}}\sigma_{2s-}\kappa(x)\tilde{\mu}(ds,dx)+\int_{0}^{t}\int_{\mathbb{R}}\sigma_{2s-}\kappa'(x)\mu(ds,dx)
\end{equation}

and the continuous plus jumps model: 
\begin{equation} \label{eq:cont_jump_model}
    Z_{t}=X_{t}+Y_{t}
\end{equation}
Here, $W=(W_t)_{t\in\mathbb{R}}$ is a two-sided Brownian motion, $g:\mathbb{R}^+\to\mathbb{R}$ is a deterministic weight-function such that $g\in L^2(\mathbb{R}^+)$, and $\sigma_1=(\sigma_{1t})_{t\in\mathbb{R}}$ and $\sigma_2=(\sigma_{2t})_{t\in\mathbb{R}^+}$ are cádlág processes. $\mu$ is a jump measure on $\mathbb{R}_{+}\times\mathbb{R}$ with compensator $\nu(x)dxds$, $\tilde{\mu}(ds,dx):=\mu(ds,dx)-\nu(x)dxds, \kappa'(x)=x-\kappa(x)$ where $\kappa(x)$ is a continuous truncation function.

We then want to prove the convergence of the roughness signature function, which is given by 

\begin{equation}\label{eq:cof_expr}
    \begin{split}
        \hat{H}(\mathcal{X},p,\Delta_n)_T &= \frac{\log(V(\mathcal{X},p,1,2;\Delta_n)_T) - \log(V(\mathcal{X},p,1,1;\Delta_n)_T)}{\log(2)p} \\
        &=  \frac{\log(\sum_{i=2}^n\vert \mathcal{X}_{i\Delta_n}-\mathcal{X}_{(i-2)\Delta_n}\vert^p)-\log(\sum_{i=1}^n\vert \mathcal{X}_{i\Delta_n} - \mathcal{X}_{(i-1)\Delta_n}\vert^p)}{\log(2)p}
    \end{split}
\end{equation}
where $V(\mathcal{X},p,k,\nu;\Delta_n)_T$ is the $p$'th power variation of $X$ of order $k$, sampled at frequency $\nu\Delta_n$. 

To prove the convergence, we first need to establish the following lemma:

\begin{lemma}
Let $(\mathcal{X}_t)_{t\in[0,T]}$ be a stochastic process, and assume that for some $k\in\mathbb{R}$ and $p\in(a,b)$ 
\begin{equation}
    \Delta_n^{1-pk}V(\mathcal{X},p,1,1;\Delta_n)_T\overset{u.c.p.}{\longrightarrow}\Xi,
\end{equation}
for $\Delta_n\to0$ for a random variable $\Xi$ such that $0<\Xi<\infty$ a.s.. Then 
\begin{equation}
    (2\Delta_n)^{1-pk}V(\mathcal{X},p,1,2;\Delta_n)_T\overset{u.c.p.}{\longrightarrow}2\Xi,
\end{equation}
\end{lemma}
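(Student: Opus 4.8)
\emph{Approach.} The plan is to rewrite the overlapping, lag-$2$ power variation at mesh $\Delta_n$ as a sum of two \emph{non-overlapping} first-order power variations at the coarser mesh $2\Delta_n$, and then apply the hypothesis to each. Splitting the summation index in $V(\mathcal{X},p,1,2;\Delta_n)_t=\sum_{i=2}^{\lfloor t/\Delta_n\rfloor}|\mathcal{X}_{i\Delta_n}-\mathcal{X}_{(i-2)\Delta_n}|^p$ according to the parity of $i$ gives $V(\mathcal{X},p,1,2;\Delta_n)_t=A_n(t)+B_n(t)$, where $A_n(t)=\sum_m|\mathcal{X}_{2m\Delta_n}-\mathcal{X}_{2(m-1)\Delta_n}|^p$ collects the even indices and $B_n(t)=\sum_m|\mathcal{X}_{(2m+1)\Delta_n}-\mathcal{X}_{(2m-1)\Delta_n}|^p$ the odd ones, the admissible range of $m$ being read off from $\lfloor t/\Delta_n\rfloor$. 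Geometrically, the even lag-$2$ intervals tile $[0,T]$ at mesh $2\Delta_n$ with origin $0$, and the odd ones tile (up to the endpoints) the same way but with origin $\Delta_n$.

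\emph{The even part.} Using the elementary identity $\lfloor t/(2\Delta_n)\rfloor=\lfloor\lfloor t/\Delta_n\rfloor/2\rfloor$, one checks that $A_n(t)$ is exactly $V(\mathcal{X},p,1,1;2\Delta_n)_t$, the ordinary first-order power variation sampled at mesh $2\Delta_n$. Since $2\Delta_n\to0$, applying the hypothesis along the mesh sequence $2\Delta_n$ yields $(2\Delta_n)^{1-pk}A_n\overset{u.c.p.}{\longrightarrow}\Xi$.

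\emph{The odd part, and the main obstacle.} The term $B_n(t)$ is again a first-order power variation at mesh $2\Delta_n$, but on the partition whose origin is shifted from $0$ to $\Delta_n$. Heuristically the position of the origin of a vanishing partition is immaterial for power-variation asymptotics, so one expects $(2\Delta_n)^{1-pk}B_n\overset{u.c.p.}{\longrightarrow}\Xi$ as well; making this rigorous is the delicate step, since the hypothesis has been granted only for the canonical grid. Two routes are available. The clean one in context: every time this lemma is invoked, the hypothesis is itself an instance of a power-variation limit theorem (those of \cite{Todorov2010} and \cite{Corcuera2013}) whose conclusion holds for \emph{any} sequence of partitions with vanishing mesh, a shifted origin included, so $B_n$ inherits the scaled limit $\Xi$. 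The self-contained one: use $|a+b|^p\le C_p(|a|^p+|b|^p)$ on $\mathcal{X}_{(2m+1)\Delta_n}-\mathcal{X}_{(2m-1)\Delta_n}=(\mathcal{X}_{(2m+1)\Delta_n}-\mathcal{X}_{2m\Delta_n})+(\mathcal{X}_{2m\Delta_n}-\mathcal{X}_{(2m-1)\Delta_n})$ to get $B_n\le C_p\,V(\mathcal{X},p,1,1;\Delta_n)$, so $(2\Delta_n)^{1-pk}B_n$ is tight, and then exploit the interlacing of the two mesh-$2\Delta_n$ grids together with the monotonicity of $t\mapsto A_n(t),B_n(t)$ (a Dini-type argument, using that the limit process $\Xi_t$ is continuous) to force the limit to equal $\Xi$.

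\emph{Conclusion.} Adding the two convergences, $(2\Delta_n)^{1-pk}V(\mathcal{X},p,1,2;\Delta_n)_t=(2\Delta_n)^{1-pk}\bigl(A_n(t)+B_n(t)\bigr)\overset{u.c.p.}{\longrightarrow}\Xi+\Xi=2\Xi$, which is the assertion, and the sum of two u.c.p.-convergent sequences is u.c.p.-convergent, so uniformity is preserved. The only remaining bookkeeping is that, depending on the parity of $\lfloor t/\Delta_n\rfloor$, the even and odd families miss one block of the other near an endpoint; this discrepancy is at most a single increment $|\mathcal{X}_{i\Delta_n}-\mathcal{X}_{(i-2)\Delta_n}|^p$ supported on a time interval of length $2\Delta_n$, hence negligible in the limit — here the assumption $0<\Xi<\infty$ a.s. is what guarantees no such isolated shrinking-block term survives the scaling.
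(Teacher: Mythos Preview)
Your proof takes essentially the same route as the paper: split the lag-$2$ sum by parity of the index into two non-overlapping mesh-$2\Delta_n$ power variations, one on the canonical grid and one shifted by $\Delta_n$, and argue each scaled piece converges to $\Xi$. You are in fact more careful than the paper on the shifted-origin issue---the paper simply asserts the shift ``will naturally make no difference'' because the $p$-variation over an increasingly small interval vanishes, whereas you correctly flag this as the delicate step and note that it is justified because the underlying power-variation limit theorems of \cite{Todorov2010} and \cite{Corcuera2013} are insensitive to a grid shift.
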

\begin{proof}
Assume for simplicity that $n$ is even. To see this, we first rewrite $V(\mathcal{X},p,1,2;\Delta_n)$ as 
\begin{equation}
    \begin{split}
        V(\mathcal{X},p,1,2;\Delta_n)_T &= \sum_{i=2}^n\vert \mathcal{X}_{i\Delta_n}-\mathcal{X}_{(i-2)\Delta_n}\vert^p \\
        &= \sum_{i=1}^{n/2}\vert \mathcal{X}_{2i\Delta_n}-X_{2(i-2)\Delta_n}\vert^p + \sum_{i=1}^{n/2-1}\vert \mathcal{X}_{(2i+1)\Delta_n}-X_{(2(i-1)+1)\Delta_n}\vert^p
    \end{split}
\end{equation}
Here, each of the two sums are increments of $\mathcal{X}$ over the interval $[0,T]$ sampled at frequency $2\Delta_n$. The major difference between these sums and the first one is that we now have non-overlapping intervals. This means that each of these sums will correspond to $V(\mathcal{X},p,1,1;2\Delta_n)_T$, where the only difference is whether the sums is started at time $t=0$ or at time $t=\Delta_n$. As $n\to\infty$, this will naturally make no difference, as we are working with either a continuous process or one with Lévy-driven jumps, and hence the $p$-variation over an increasingly small interval must tend to zero. By our assumption we thus get that 

\begin{equation}
    \begin{split}
        &(2\Delta_n)^{1-pk}V(\mathcal{X},p,1,2;\Delta_n)_T \\
        &= (2\Delta_n)^{1-pk}(\sum_{i=1}^{n/2}\vert \mathcal{X}_{2i\Delta_n}-\mathcal{X}_{2(i-2)\Delta_n}\vert^p + \sum_{i=1}^{n/2-1}\vert \mathcal{X}_{(2i+1)\Delta_n}-\mathcal{X}_{(2(i-1)+1)\Delta_n}\vert^p) \\
        &= (2\Delta_n)^{1-pk}\sum_{i=1}^{n/2}\vert \mathcal{X}_{2i\Delta_n}-\mathcal{X}_{2(i-2)\Delta_n}\vert^p + (2\Delta_n)^{1-pk}\sum_{i=1}^{n/2-1}\vert \mathcal{X}_{(2i+1)\Delta_n}-\mathcal{X}_{(2(i-1)+1)\Delta_n}\vert^p \\
        &\overset{u.c.p.}{\longrightarrow} \Xi + \Xi \\
        &= 2\Xi
    \end{split}
\end{equation}
\end{proof}

Having established this, we are ready to show the main proposition that will be used for showing the convergence. 

\begin{proposition}\label{prop:convergence_general}
Let $(X_t)_{t\in[0,T]}$ be a stochastic process and assume that for some $k\in\mathbb{R}$ and some $p\in(a,b)$ that
\begin{equation}\label{eq:convergence_condition}
    \Delta_n^{1-pk}L_0(\Delta_n)V(\mathcal{X},p,1,1;\Delta_n)_T\overset{u.c.p.}{\longrightarrow}\Xi,
\end{equation}
for a random variable $\Xi$ such that $0<\Xi<\infty$ and function $L_0$ which is slowly varying at zero, then 
\begin{equation}
    \hat{H}(\mathcal{X},p,\Delta_n)_T \overset{u.c.p.}{\longrightarrow} \begin{cases}
k & \text{if }pk\neq1\\
1/p & \text{if }pk=1
\end{cases}
\end{equation}

\end{proposition}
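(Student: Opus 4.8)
The plan is to deduce the proposition from the convergence hypothesis \eqref{eq:convergence_condition}, used at the two meshes $\Delta_n$ and $2\Delta_n$, together with the Lemma, and then to transfer the conclusion through $\log$ and a quotient by the continuous mapping theorem. First I would apply \eqref{eq:convergence_condition} with $2\Delta_n$ in place of $\Delta_n$ and invoke the Lemma --- more precisely, rerun its proof with the slowly varying prefactor $L_0$ carried along --- to obtain
\[
    (2\Delta_n)^{1-pk}L_0(2\Delta_n)\,V(\mathcal{X},p,1,2;\Delta_n)_T \overset{u.c.p.}{\longrightarrow} 2\,\Xi .
\]
The interlacing decomposition from the Lemma writes $V(\mathcal{X},p,1,2;\Delta_n)_T$ as two sums, each a mesh-$2\Delta_n$ first-order power variation over $[0,T]$ (started at $0$, resp. at $\Delta_n$), so each contributes a copy of $\Xi$; the only point to check is that \eqref{eq:convergence_condition} is insensitive to the shifted starting point, which holds for exactly the reason given in the Lemma's proof.

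Write $A_n := \Delta_n^{1-pk}L_0(\Delta_n)\,V(\mathcal{X},p,1,1;\Delta_n)_T$ and $B_n := (2\Delta_n)^{1-pk}L_0(2\Delta_n)\,V(\mathcal{X},p,1,2;\Delta_n)_T$, so that $A_n\to\Xi$ and $B_n\to2\Xi$ (u.c.p.). Since $0<\Xi<\infty$ a.s. --- and, for the uniform statement, $\Xi$ stays bounded away from $0$ and $\infty$ on the relevant compact set --- the continuous mapping theorem gives $B_n/A_n\to2$, whence
\[
    \frac{V(\mathcal{X},p,1,2;\Delta_n)_T}{V(\mathcal{X},p,1,1;\Delta_n)_T} = \frac{B_n}{A_n}\cdot 2^{pk-1}\cdot\frac{L_0(\Delta_n)}{L_0(2\Delta_n)} \overset{u.c.p.}{\longrightarrow} 2\cdot 2^{pk-1} = 2^{pk},
\]
the last factor tending to $1$ by slow variation of $L_0$ at zero. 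As $2^{pk}$ is a positive constant, taking logarithms is now immediate, giving $\log V(\mathcal{X},p,1,2;\Delta_n)_T - \log V(\mathcal{X},p,1,1;\Delta_n)_T \overset{u.c.p.}{\longrightarrow} pk\log 2$; dividing by $p\log 2$ yields $\hat{H}(\mathcal{X},p,\Delta_n)_T \overset{u.c.p.}{\longrightarrow} k$. Since $pk=1$ forces $k=1/p$, this one identity already accounts for the second case of the statement, so no separate argument is needed there.

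The main obstacle is not the algebra but the justification of the two facts used as black boxes: (i) that the Lemma's interlacing decomposition goes through unchanged once the slowly varying factor is inserted, and that the $\Delta_n$-offset between the two subsums is asymptotically negligible at mesh $2\Delta_n$; and (ii) that division acts continuously in the u.c.p. sense here, which is what forces the requirement that $\Xi$ --- and, in the uniform-in-$p$ reading, the map $p\mapsto\Xi(p)$ --- stay bounded away from $0$ and $\infty$ on the relevant compact sets. Both hold routinely in the applications of interest (where $\Xi$ is, e.g., a finite sum of powers of jump sizes or a deterministic multiple of an integrated-volatility functional, continuous in $p$), but this is where the real content sits; granting them, the proposition follows at once.
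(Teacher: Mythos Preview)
Your proof is correct and follows essentially the same approach as the paper: both arguments invoke the Lemma to obtain $(2\Delta_n)^{1-pk}L_0(2\Delta_n)V(\mathcal{X},p,1,2;\Delta_n)_T\to 2\Xi$, use slow variation of $L_0$ to kill the ratio $L_0(\Delta_n)/L_0(2\Delta_n)$, and then pass through the logarithm via continuous mapping/Slutsky. The only cosmetic difference is that the paper expands the log-difference algebraically term by term before taking limits, whereas you form the ratio $B_n/A_n$ first and apply $\log$ at the end; these are equivalent reorganisations of the same computation, and your version is arguably tidier and more explicit about where the hypotheses $0<\Xi<\infty$ and the extension of the Lemma to include $L_0$ are actually used.
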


\begin{proof}

To see this, we first expand the estimator

\begin{equation}
    \begin{split}
        &\hat{H}(\mathcal{X},p,\Delta_n)_T \\
        &= \frac{\log(V(\mathcal{X},p,1,2;\Delta_n)_T) - \log(V(\mathcal{X},p,1,1;\Delta_n)_T)}{\log(2)p} \\
        &= \frac{\log(2\Delta_nL_0(2\Delta_n)V(\mathcal{X},p,1,2;\Delta_n)_T) - \log(\Delta_nL_0(\Delta_n)V(\mathcal{X},p,1,1;\Delta_n)_T)-\log(\frac{2\Delta_n}{\Delta_n}))}{\log(2)p}\\
        &+ \log\left(\frac{L_0(\Delta_n)}{L_0(2\Delta_n)}\right)\frac{1}{\log(2)p} \\
        &= \frac{\log(2\Delta_nL_0(2\Delta_n)V(\mathcal{X},p,1,2;\Delta_n)_T) - \log(\Delta_nL_0(\Delta_n)V(\mathcal{X},p,1,1;\Delta_n)_T)-\log(2)}{\log(2)p} \\
        &+ \log\left(\frac{L_0(\Delta_n)}{L_0(2\Delta_n)}\right)\frac{1}{\log(2)p} \\
        &= \frac{\log(2\Delta_nL_0(2\Delta_n)V(\mathcal{X},p,1,2;\Delta_n)_T) - \log(\Delta_nL_0(\Delta_n)V(\mathcal{X},p,1,1;\Delta_n)_T)}{\log(2)p} \\
        &- \frac{1}{p} + \log\left(\frac{L_0(\Delta_n)}{L_0(2\Delta_n)}\right)\frac{1}{\log(2)p} \\
        &= \frac{\log\left(\dfrac{\left(2\Delta_n\right)^{pk}}{\left(\Delta_n\right)^{pk}}\right) + \log(\left(2\Delta_n\right)^{1-pk}L_0(2\Delta_n)V(\mathcal{X},p,1,2;\Delta_n)_T) - \log(\left(\Delta_n\right)^{1-pk}L_0(\Delta_n)V(\mathcal{X},p,1,1;\Delta_n)_T)}{\log(2)p} \\
        &- \frac{1}{p} + \log\left(\frac{L_0(\Delta_n)}{L_0(2\Delta_n)}\right)\frac{1}{\log(2)p} \\
        &= \frac{pk\log(2) + \log(\left(2\Delta_n\right)^{1-pk}L_0(2\Delta_n)V(\mathcal{X},p,1,2;\Delta_n)_T) - \log(\left(\Delta_n\right)^{1-pk}L_0(\Delta_n)V(\mathcal{X},p,1,1;\Delta_n)_T)}{\log(2)p} \\
        &- \frac{1}{p}+ \log\left(\frac{L_0(\Delta_n)}{L_0(2\Delta_n)}\right)\frac{1}{\log(2)p} \\
        &= \frac{pk}{p}+ \frac{\log(\left(2\Delta_n\right)^{1-pk}L_0(2\Delta_n)V(\mathcal{X},p,1,2;\Delta_n)_T) - \log(\left(\Delta_n\right)^{1-pk}L_0(\Delta_n)V(\mathcal{X},p,1,1;\Delta_n)_T)}{\log(2)p} \\
        &- \frac{1}{p} + \log\left(\frac{L_0(\Delta_n)}{L_0(2\Delta_n)}\right)\frac{1}{\log(2)p}
    \end{split}
\end{equation}

Here, we have just used basic properties of the logarithm. Now, by assumption we know that $\left(\frac{1}{n}\right)^{1-pk}V(\mathcal{X},p,1,1;\Delta_n)_T\overset{u.c.p.}{\to}\Xi$ as $n\to\infty$, so by a Slutsky argument we get that $\log(\left(\frac{1}{n}\right)^{1-pk}V(\mathcal{X},p,1,1;\Delta_n)_T)\overset{u.c.p.}{\to}\log(\Xi)$. By Lemma 1 above, we also know that our assumption implies that $\left(\frac{2}{n}\right)^{1-pk}V(\mathcal{X},p,1,2;\Delta_n)_T\overset{u.c.p.}{\to}\log(2\Xi)$. Hence, we get that

\begin{equation}
    \begin{split}
                \hat{H}(\mathcal{X},p,\Delta_n)_T &\overset{p}{\to} \frac{pk}{p} + \frac{\log(2\Xi) - \log(\Xi)}{\log(2)p} - \frac{1}{p} \\
        &= \frac{pk}{p}
    \end{split}
\end{equation}
\end{proof}

Having proven this result, the desired convergence follows as simple corollaries of the above. 

\begin{corollary}
    Assume that $\mathcal{X}$ is given by \eqref{eq:cont_model}, and that Assumptions \ref{assumption:kernel} and \ref{assumption:variogram} are satisfied. Then 
    \begin{equation}
        \hat{H}(\mathcal{X},p,\Delta_n)_t \overset{u.c.p}{\to}\begin{cases}
    H & \text{if }0<p<1/H\\
    H & \text{if }p>1/H
    \end{cases}
    \end{equation}
    as $n\to\infty$
\end{corollary}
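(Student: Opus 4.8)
The plan is to read the corollary off Proposition~\ref{prop:convergence_general} by checking its hypothesis~\eqref{eq:convergence_condition} with the exponent $k=H=\alpha+\tfrac12$; the only substantive ingredient is the law of large numbers for the realized power variation of a $\mathcal{BSS}$ process. Under Assumptions~\ref{assumption:kernel} and~\ref{assumption:variogram} the small-scale increments of the Gaussian core $G$ coincide, to leading order, with those of a fractional Brownian motion of Hurst index $H=\alpha+\tfrac12$ --- this is the content of the small-scale analysis in \cite{Corcuera2013}, starting from $\E[(G_{t+\Delta_n}-G_t)^2]=R(\Delta_n)=\Delta_n^{2\alpha+1}L_R(\Delta_n)$. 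Writing $\tau_n:=R(\Delta_n)^{1/2}=\Delta_n^{H}L_R(\Delta_n)^{1/2}$, the corresponding LLN (in the first-order-increment form, which is legitimate in the rough range $\alpha<\tfrac12$, cf.\ the Remark following~\eqref{eq:def_asf}) gives, for every $p>0$,
\[
\frac{\Delta_n}{\tau_n^{\,p}}\,V(X,p,1,1;\Delta_n)_t\;\overset{u.c.p.}{\longrightarrow}\;m_p\int_0^t|\sigma_{1s}|^p\,ds\;=:\;\Xi_t,
\]
where $m_p$ is the $p$-th absolute moment of a standard Gaussian. Since $\Delta_n\tau_n^{-p}=\Delta_n^{1-pH}L_R(\Delta_n)^{-p/2}$, this is exactly~\eqref{eq:convergence_condition} with $k=H$ and $L_0(\Delta_n):=L_R(\Delta_n)^{-p/2}$, which is slowly varying at zero; finiteness of $\Xi_t$ follows from $\sigma_1$ being locally bounded (càdlàg on the compact $[0,t]$), and $\Xi_t>0$ a.s.\ from $\sigma_1\not\equiv0$.

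With $k=H$ in hand, Proposition~\ref{prop:convergence_general} finishes the argument: $pk=pH$ equals $1$ precisely at $p=1/H$, so for $p\neq1/H$ we get $\hat H(\mathcal{X},p,\Delta_n)_t\overset{u.c.p.}{\longrightarrow}k=H$, while at $p=1/H$ we get $1/p=H$; thus the limit is $H$ for \emph{every} $p\in(0,\infty)$, which is the (two-branch, but coinciding) statement of the corollary. This coincidence is precisely why the roughness signature function of a purely continuous rough process is asymptotically flat at the level $H$, with no kink. For the locally-uniform-in-$p$ refinement stated in Theorem~\ref{thm:consistency}, note that the LLN above holds jointly locally uniformly in $(p,t)$ --- $m_p$ and $p\mapsto\int_0^t|\sigma_{1s}|^p\,ds$ are continuous in $p$ --- so~\eqref{eq:convergence_condition} holds uniformly for $p$ in compact subsets of $(0,\infty)$, and the purely algebraic manipulations inside the proof of Proposition~\ref{prop:convergence_general}, all applied with $p$ bounded away from $0$ and $\infty$, preserve that uniformity.

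The only genuine obstacle sits in the first step: one must verify that the hypotheses of the relevant LLN of \cite{Corcuera2013} are implied by Assumptions~\ref{assumption:kernel}--\ref{assumption:variogram} --- in particular, that part~3 of Assumption~\ref{assumption:variogram} is exactly what makes the slowly varying remainder $L_R$ behave uniformly near zero so that the normalization $\tau_n$ is the right one --- and that replacing their second-order increments by first-order increments is harmless here. It is: second-order increments were needed only to push the central limit theorem through for smooth $\mathcal{BSS}$ processes ($H$ close to $1$), whereas the law of large numbers needs no upper bound on $H$ beyond $H<1$, which holds automatically since $H=\alpha+\tfrac12<1$. Everything past this point is bookkeeping.
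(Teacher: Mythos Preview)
Your proof is correct and follows essentially the same route as the paper: invoke the law of large numbers for the realized power variation of a $\mathcal{BSS}$ process from \cite{Corcuera2013}, then apply Proposition~\ref{prop:convergence_general} with $k=H=\alpha+\tfrac12$. You are in fact slightly more careful than the paper's own argument in retaining the slowly varying factor $L_R(\Delta_n)^{-p/2}$ in the normalization (which is precisely why $L_0$ appears in the hypothesis of Proposition~\ref{prop:convergence_general}) and in noting that at $p=1/H$ the two branches coincide.
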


\begin{proof}
    We know from \cite{Corcuera2013}, that for \eqref{eq:cont_model} and under the assumptions that 
    \begin{equation}\label{eq:cont_power_var_conv}
        \Delta_n^{1-(\alpha+1/2)p}V(\mathcal{X},p,1,1;\Delta_n)_t\overset{u.c.p}{\longrightarrow}V(\mathcal{X},p)_t:=m_p\int_0^t\vert \sigma_{1s}\vert^pds
    \end{equation}
    Hence, using Proposition \ref{prop:convergence_general} with $k=H:=\alpha+1/2$, we find that we get the claimed convergence.
\end{proof}

\begin{corollary}
    Assume that $X$ is given by \eqref{eq:jump_model} with activity index $\beta$, and that Assumptions \ref{assumption:levy_regularity}, \ref{assumption:levy_measure}, and \ref{assumption:levy_drift} are satisfied. Then
    \begin{equation}
        \hat{H}(\mathcal{X},p,\Delta_n)_t\overset{p}{\to}\begin{cases}
    1/\beta & \text{if }0<p<\beta\\
    1/p & \text{if }p>\beta
    \end{cases}
    \end{equation}
    as $n\to\infty$    
\end{corollary}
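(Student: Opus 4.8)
The plan is to obtain the corollary directly from Proposition~\ref{prop:convergence_general}, feeding it the appropriate scaling exponent in each of the two regimes $0<p<\beta$ and $p>\beta$; in both cases the substantive work is reduced to quoting the correct limit theorem for the first-order power variation $V(\mathcal{X},p,1,1;\Delta_n)_t$ of the pure-jump model~\eqref{eq:jump_model}.

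First, for $0<p<\beta$. Here I would invoke the limit theory for power variations of pure-jump processes whose Lévy measure is stable-like near the origin --- essentially the content of the relevant results of \cite{Todorov2010}, building on \cite{Ait-Sahalia2009a} --- which under Assumptions~\ref{assumption:levy_regularity}, \ref{assumption:levy_measure} and \ref{assumption:levy_drift} give
\begin{equation}
    \Delta_n^{1-p/\beta}L_0(\Delta_n)\,V(\mathcal{X},p,1,1;\Delta_n)_t\overset{u.c.p.}{\longrightarrow}C_{p,\beta}\int_0^t|\sigma_{2s}|^p\,ds ,
\end{equation}
for a suitable slowly varying $L_0$ (inherited from the slowly varying factor $\phi$ in Assumption~\ref{assumption:levy_measure}, and trivial when $\nu$ is exactly stable-like near zero) and a constant $C_{p,\beta}>0$; by Assumption~\ref{assumption:levy_regularity} the limit is strictly positive and a.s.\ finite, since $\sigma_2$ is everywhere nonzero and locally bounded. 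This is precisely hypothesis~\eqref{eq:convergence_condition} of Proposition~\ref{prop:convergence_general} with $k=1/\beta$, and as $pk=p/\beta<1$ we are in the case $pk\neq1$, so the proposition returns $\hat{H}(\mathcal{X},p,\Delta_n)_t\overset{u.c.p.}{\longrightarrow}1/\beta$. It is here that Assumption~\ref{assumption:levy_drift} is essential: in the finite-variation range $\beta\leq1$ it forces the drift and the jump-compensation terms to cancel out (using symmetry when $\beta=1$), so that the $\beta$-stable component is genuinely the leading contribution to $V(\mathcal{X},p,1,1;\Delta_n)_t$ and the exponent $1-p/\beta$ is the right normalisation.

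Second, for $p>\beta$. For these powers the $p$-th power variation of a pure-jump process concentrates on its jumps, and one has the elementary convergence
\begin{equation}
    V(\mathcal{X},p,1,1;\Delta_n)_t\overset{u.c.p.}{\longrightarrow}\sum_{0<s\leq t}|\Delta\mathcal{X}_s|^p ,
\end{equation}
with no normalisation at all; the right-hand side is a.s.\ finite because $p$ exceeds the Blumenthal-Getoor index $\beta$ of $\mathcal{X}$, and strictly positive because the observed path is assumed to contain at least one jump. This is again hypothesis~\eqref{eq:convergence_condition}, now with $k=1/p$ and $L_0\equiv1$: then $\Delta_n^{1-pk}=\Delta_n^{0}=1$ and $pk=1$, so Proposition~\ref{prop:convergence_general} yields $\hat{H}(\mathcal{X},p,\Delta_n)_t\overset{u.c.p.}{\longrightarrow}1/p$. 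In particular the convergence in probability claimed in the corollary holds, and local uniformity in $p$ on $(0,\beta)$ and on $(\beta,\infty)$ is inherited from the corresponding uniformity of the two power-variation limit theorems just used.

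The main obstacle is not any algebra --- all of that is already packaged in Proposition~\ref{prop:convergence_general} --- but checking that the power-variation limit theorem invoked for $p<\beta$ genuinely holds in the normalised form~\eqref{eq:convergence_condition}, i.e.\ with the exact exponent $1-p/\beta$ and with every anomalous fluctuation absorbed into a truly slowly varying $L_0$. This is where Assumption~\ref{assumption:levy_measure} (a stable-like leading term together with a controlled remainder $\nu_2$) and Assumption~\ref{assumption:levy_drift} do the heavy lifting, and it is the only place that leans on the deeper results of \cite{Todorov2010} rather than on bookkeeping; the boundary power $p=\beta$ is deliberately excluded, since there the scaling picks up logarithmic corrections and the two branches $1/\beta$ and $1/p$ agree.
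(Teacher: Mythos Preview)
Your proposal is correct and follows essentially the same route as the paper: in each regime you quote the appropriate power-variation limit from \cite{Todorov2010} (scaling $\Delta_n^{1-p/\beta}$ for $p<\beta$, no scaling for $p>\beta$) and feed it into Proposition~\ref{prop:convergence_general} with $k=1/\beta$ and $k=1/p$ respectively. The only minor over-complication is the slowly varying factor $L_0$: under Assumption~\ref{assumption:levy_measure} the leading part $\nu_1$ is exactly stable-like and the remainder $\nu_2$ is of strictly lower order, so the Todorov--Tauchen result already gives convergence with $L_0\equiv1$, and there is no need to invoke a nontrivial slowly varying correction.
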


\begin{proof}
    From \cite{Todorov2010}, we know from that under our assumptions that if $p<\beta$
    \begin{equation}
        \Delta_n^{1-p / \beta} V\left(p, \mathcal{X}, \Delta_n\right)_t \stackrel{\text { u.c.p. }}{\longrightarrow} \int_0^t\left|\sigma_{2 u}\right|^p g_p\left(a_s\right) \mathrm{d} u,
    \end{equation}
    and if $p>\beta$
    \begin{equation}
        V\left(p, \mathcal{X}, \Delta_n\right)_t \stackrel{\text { u.c.p. }}{\longrightarrow} \sum_{s \leq t}\left|\Delta \mathcal{X}_s\right|^p
    \end{equation}
    From this, we see that when $p<\beta$, picking $k=1/\beta$ means that Proposition \ref{prop:convergence_general} implies that $\hat{H}(\mathcal{X},p,\Delta_n)_T$ converges to $1/\beta$. When $p>\beta$, we can pick $k=1/p$, and then the proposition implies that $\hat{H}(\mathcal{X},p,\Delta_n)_T$ converges to $1/p$.
\end{proof}

To show the convergence when we have both jumps and a continuous component we need the following lemma: 

\begin{lemma}\label{lemma:contjump_pvar_conv}
    Assume that $\mathcal{X}$ is given by \eqref{eq:cont_jump_model} and that Assumptions \ref{assumption:kernel}-\ref{assumption:levy_drift} are satisfied. Then, if $p<1/H$, we have that 
    \begin{equation}
        \Delta_n^{1-pH}V(\mathcal{X},p,1,\Delta_n)_t\overset{u.c.p.}{\to}m_p\int_0^t\vert \sigma_{1s}\vert^pds,
    \end{equation}
    and if $p>1/H$,
    \begin{equation}
        V(\mathcal{X},p,1,\Delta_n)_t\overset{u.c.p.}{\to}\sum_{s \leq t}\left|\Delta \mathcal{X}_s\right|^p.
    \end{equation}
\end{lemma}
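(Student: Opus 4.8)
The plan is to split $\mathcal{X}=Z$ into its two components, $Z_t=X_t+Y_t$, and to combine the component-wise power-variation limits already recalled above — the \cite{Corcuera2013} limit $\Delta_n^{1-pH}V(X,p,1,\Delta_n)_t\to m_p\int_0^t|\sigma_{1s}|^p\,ds$ used in the first corollary, and the \cite{Todorov2010} limits for $V(Y,p,1,\Delta_n)_t$ used in the second — with an argument showing that the \emph{less active} of the two components contributes nothing once the correct normalisation is attached. The structural fact behind this is the ordering of the activity indices: in the paper's rough regime $\alpha<0$ we have $H=\alpha+1/2<1/2$, while Assumption \ref{assumption:levy_measure} forces $\beta<2$, so $H<\tfrac12<\tfrac1\beta$, i.e.\ $X$ has the strictly larger activity index $1/H>\beta$. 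This is exactly why $X$ dominates when $p<1/H$ and $Y$ dominates when $p>1/H$.

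First I would record the deterministic estimate used on each increment $\Delta_i^n\mathcal{X}=\mathcal{X}_{i\Delta_n}-\mathcal{X}_{(i-1)\Delta_n}=\Delta_i^nX+\Delta_i^nY$: for $0<p\le1$, subadditivity of $x\mapsto x^p$ gives $\big||a+b|^p-|a|^p\big|\le|b|^p$; for $p>1$, the mean value theorem together with $(x+y)^{p-1}\le c_p(x^{p-1}+y^{p-1})$ gives $\big||a+b|^p-|a|^p\big|\le c_p\big(|b|\,|a|^{p-1}+|b|^p\big)$. Summing over $i$ and, when $p>1$, applying H\"older's inequality with exponents $p$ and $p/(p-1)$ to the cross term gives the master bound
\[
\big|V(\mathcal{X},p,1,\Delta_n)_t-V(X,p,1,\Delta_n)_t\big|\ \le\ c_p\Big(V(Y,p,1,\Delta_n)_t^{1/p}\,V(X,p,1,\Delta_n)_t^{(p-1)/p}+V(Y,p,1,\Delta_n)_t\Big),
\]
together with the analogous bound obtained by exchanging the roles of $X$ and $Y$. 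Each term on the right is nondecreasing in $t$, so it suffices to control it at $t=T$ in order to get u.c.p.\ statements.

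For $p<1/H$, the \cite{Corcuera2013} limit gives $V(X,p,1,\Delta_n)_t=O_p(\Delta_n^{pH-1})$ while the \cite{Todorov2010} limits give $V(Y,p,1,\Delta_n)_t=O_p(\Delta_n^{p/\beta-1})$ when $p<\beta$ and $O_p(1)$ when $p>\beta$; in either subcase $\Delta_n^{1-pH}V(Y,p,1,\Delta_n)_t\to0$, since the resulting exponent equals $p(1/\beta-H)>0$ in the first case and $1-pH>0$ in the second. Splitting the prefactor as $\Delta_n^{1-pH}=\Delta_n^{(1-pH)/p}\,\Delta_n^{(1-pH)(p-1)/p}$ and distributing the two factors onto $V(Y,\cdot)^{1/p}$ and $V(X,\cdot)^{(p-1)/p}$, the cross term in the master bound becomes $\big(\Delta_n^{1-pH}V(Y,p,1,\Delta_n)_t\big)^{1/p}\big(\Delta_n^{1-pH}V(X,p,1,\Delta_n)_t\big)^{(p-1)/p}$; its first factor vanishes and its second is bounded by the Corcuera limit, so $\Delta_n^{1-pH}|V(\mathcal{X},p,1,\Delta_n)_t-V(X,p,1,\Delta_n)_t|\to0$ u.c.p., and Slutsky's lemma with the Corcuera limit gives the first claim. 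For $p>1/H$ I would compare $\mathcal{X}$ with $Y$ instead: now $pH>1$, so $V(X,p,1,\Delta_n)_t=\Delta_n^{pH-1}O_p(1)\to0$ u.c.p., while $p>1/H>\beta$ makes $V(Y,p,1,\Delta_n)_t\to\sum_{s\le t}|\Delta Y_s|^p=\sum_{s\le t}|\Delta\mathcal{X}_s|^p$ u.c.p.\ (as $X$ is continuous); substituting these into the swapped master bound makes both terms on the right vanish, which gives $V(\mathcal{X},p,1,\Delta_n)_t\to\sum_{s\le t}|\Delta\mathcal{X}_s|^p$ u.c.p.

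The step I expect to be the main obstacle is the bookkeeping for $p>1$: one has to keep the H\"older split of the cross term aligned with the split $\Delta_n^{1-pH}=\Delta_n^{(1-pH)/p}\Delta_n^{(1-pH)(p-1)/p}$ so that one factor tends to the finite Corcuera limit while the other carries a \emph{strictly positive} power of $\Delta_n$ — and the positivity of that power is precisely where $H<1/\beta$, i.e.\ the rough regime $\alpha<0$ combined with $\beta<2$, is used. The remaining point requiring some care is the passage from the pointwise-in-$t$ bounds to u.c.p.\ convergence, which is routine here because every error term in the master bound is monotone in $t$ and hence controlled by its value at $T$, while the two component limits are themselves u.c.p.; the boundary powers $p=\beta$ and $p=1/H$ are excluded, consistently with the local-uniformity domains in Theorem \ref{thm:consistency}.
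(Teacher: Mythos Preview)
Your proposal is correct and follows the same overall strategy as the paper: decompose $\mathcal{X}=X+Y$, invoke the known power-variation limits for each component separately, and show that under the appropriate normalisation the less active component contributes nothing, using the key ordering $H<1/2<1/\beta$.

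The only difference lies in the technical device for $p>1$. The paper works with $p$-th roots and Minkowski's inequality, bounding $|V(\mathcal{X})^{1/p}-V(X)^{1/p}|\le V(Y)^{1/p}$ (respectively with $X$ and $Y$ swapped), and then lifts the conclusion back to $V$ via the continuous mapping theorem. You instead bound $|V(\mathcal{X})-V(X)|$ directly using the pointwise estimate $\big||a+b|^p-|a|^p\big|\le c_p(|b|\,|a|^{p-1}+|b|^p)$ followed by H\"older on the cross term. Both are standard and yield the same conclusion; your route is marginally more direct in that it avoids the detour through $V^{1/p}$ and the CMT step, at the cost of having to split the normalising factor $\Delta_n^{1-pH}$ across the H\"older factors --- which you do correctly. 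The paper's Minkowski argument is a touch cleaner to write down since no cross term appears.
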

\begin{proof}
    We assume here that $\mathcal{X}=Z=X+Y$. First, we note that we can write 
    \begin{equation}
        \begin{split}
            V(\mathcal{X},p,1,\Delta_n)_t&=\sum_{i=1}^n\vert \mathcal{X}_{i\Delta_n}-\mathcal{X}_{(i-1)\Delta_n}\vert^p\\
            &=\sum_{i=1}^n\vert X_{i\Delta_n}-X_{(i-1)\Delta_n} + Y_{i\Delta_n}-Y_{(i-1)\Delta_n}\vert^p\\
        \end{split}
    \end{equation}
    Now, if $p\leq1$, we know by the triangle inequality that 
    \begin{equation}
        \begin{split}
            V(\mathcal{X},p,1,\Delta_n)_t&\leq\sum_{i=1}^n\vert X_{i\Delta_n}-X_{(i-1)\Delta_n}\vert^p + \sum_{i=1}^nX_{i\Delta_n}-X_{(i-1)\Delta_n}\vert^p\\
            &= V(X,p,1,\Delta_n)_t + V(Y,p,1,\Delta_n)_t
        \end{split}
    \end{equation}
    and, by a similar argument 
    \begin{equation}
        \begin{split}
            V(X,p,1,\Delta_n)_t&\leq V(\mathcal{X},p,1,\Delta_n)_t + V(Y,p,1,\Delta_n)_t
        \end{split}
    \end{equation}
    which implies that 
    \begin{equation}
        \vert V(\mathcal{X},p,1,\Delta_n)_t - V(X,p,1,\Delta_n)_t\vert\leq V(Y,p,1,\Delta_n)_t.
    \end{equation}
    and hence 
    \begin{equation}\label{eq:triangle_ineq_pvar}
        \vert \Delta_n^{1-pH}V(\mathcal{X},p,1,\Delta_n)_t - \Delta_n^{1-pH}V(X,p,1,\Delta_n)_t\vert\leq \Delta_n^{1-pH}V(Y,p,1,\Delta_n)_t.
    \end{equation}    
    Now, we know that if $p<\beta$, we would need to scale $V(X,p,1,\Delta_n)_t$ by $\Delta_n^{1-p/\beta}$ for convergence, while if $p>\beta$ we would not need any scaling. Since $p<1/H$, the power on $\Delta_n^{1-p/H}$ is positive, and hence in the latter case the right hand side of \eqref{eq:triangle_ineq_pvar} converges to zero. I the former case, we know that $\beta\in(0,2)$, while $H\in(0,1/2)$, so $1-pH>1-p/\beta$, so in this case the right-hand side of \eqref{eq:triangle_ineq_pvar} also converges to zero. This means that $\Delta_n^{1-p/H}V(\mathcal{X},p,1,\Delta_n)_t$ and $\Delta_n^{1-p/H}V(X,p,1,\Delta_n)_t$ must share the same limit, which concludes the case $p\leq1$.

    In the case $p>1$, know by Minkowski's inequality that 
    \begin{equation}
            V(\mathcal{X},p,1,\Delta_n)^{1/p}_t\leq V(X,p,1,\Delta_n)^{1/p}_t + V(Y,p,1,\Delta_n)^{1/p}_t
    \end{equation}

    Now, we have two cases. First, if $p<1/H$, it follows by similar arguments as in the $p<1$ case that 
    \begin{equation}
        \begin{split}
                &\vert \Delta_n^{(1-pH)/p}V(\mathcal{X},p,1,\Delta_n)^{1/p}_t - \Delta_n^{(1-pH)/p}V(X,p,1,\Delta_n)^{1/p}_t\vert\\&\leq \Delta_n^{(1-pH)/p}V(Y,p,1,\Delta_n)^{1/p}_t.\\
        \end{split}
    \end{equation}     
    By the same arguments as earlier, $\left(\Delta_n^{1-pH}V(Y,p,1,\Delta_n)_t\right)^{1/p}\to0$, and hence $\left(\Delta_n^{1-pH}V(\mathcal{X},p,1,\Delta_n)_t\right)^{1/p}$ and $\left(\Delta_n^{1-pH}V(X,p,1,\Delta_n)_t\right)^{1/p}$ must have the same limit, and hence by the continuous mapping theorem $\Delta_n^{1-pH}V(\mathcal{X},p,1,\Delta_n)_t$ and $\Delta_n^{1-pH}V(X,p,1,\Delta_n)_t$ must also share their limit, which proves the claim for $p<1/H$.

    Now in the case $p>1/H$, using Minkowski as before, but now taking out the continuous part, we can show that 
    \begin{equation}
        \begin{split}
                &\vert V(\mathcal{X},p,1,\Delta_n)^{1/p}_t -V(Y,p,1,\Delta_n)^{1/p}_t\vert\\&\leq V(X,p,1,\Delta_n)^{1/p}_t.\\
        \end{split}
    \end{equation}     
    when $p>1/H$, we now that $1-pH<0$. This means that $\Delta_n^{1-pH}\to\infty$, and hence when we don't scale by it, $V(X,p,1,\Delta_n)^{1/p}_t$ will converge to zero. This implies, by continuous mapping theorem as before, that $V(X,p,1,\Delta_n)_t$ and $V(\mathcal{X},p,1,\Delta_n)_t$ must share their limit, which concludes the proof of the lemma. 
\end{proof}

With this lemma in place, we are ready to prove the following corollary: 

\begin{corollary}
    Assume that $\mathcal{X}$ is given by \eqref{eq:cont_jump_model}, and that Assumptions \ref{assumption:kernel}-\ref{assumption:levy_drift} are satisfied. Then
    \begin{equation}
        \hat{H}(\mathcal{X},p,\Delta_n)_t\overset{u.c.p.}{\to}\begin{cases}
    H & \text{if }0<p<1/H\\
    1/p & \text{if }p>1/H
    \end{cases}
    \end{equation}
    as $n\to\infty$    
\end{corollary}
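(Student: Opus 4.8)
The plan is to obtain this corollary exactly as the two preceding ones were obtained: combine the power-variation asymptotics of Lemma \ref{lemma:contjump_pvar_conv} with Proposition \ref{prop:convergence_general}, choosing in each regime the scaling exponent $k$ so that the dichotomy $pk=1$ versus $pk\neq1$ in the proposition reproduces the asserted piecewise limit. No estimate beyond those two results is needed, so the argument is essentially bookkeeping.

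I would split into the two cases. For $0<p<1/H$, Lemma \ref{lemma:contjump_pvar_conv} gives $\Delta_n^{1-pH}V(\mathcal{X},p,1,\Delta_n)_t\overset{u.c.p.}{\to}m_p\int_0^t|\sigma_{1s}|^p\,ds$, which is hypothesis \eqref{eq:convergence_condition} of Proposition \ref{prop:convergence_general} with $k=H$ and $L_0\equiv1$, once one records that $\Xi:=m_p\int_0^t|\sigma_{1s}|^p\,ds$ is a.s.\ finite (since $\sigma_1$ is càdlàg, hence locally bounded, and $m_p\in(0,\infty)$) and a.s.\ strictly positive (since $\sigma_1$ does not vanish). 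Because $p<1/H$ forces $pk=pH<1$, the first branch of the proposition then gives $\hat{H}(\mathcal{X},p,\Delta_n)_t\overset{u.c.p.}{\to}H$.

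For $p>1/H$, Lemma \ref{lemma:contjump_pvar_conv} instead gives $V(\mathcal{X},p,1,\Delta_n)_t\overset{u.c.p.}{\to}\sum_{s\le t}|\Delta\mathcal{X}_s|^p$; I would write the left-hand side as $\Delta_n^{\,1-p\cdot(1/p)}V(\mathcal{X},p,1,\Delta_n)_t$ so that it matches \eqref{eq:convergence_condition} with $k=1/p$ and $L_0\equiv1$. Here $\Xi:=\sum_{s\le t}|\Delta\mathcal{X}_s|^p$ is a.s.\ strictly positive because the path is assumed to carry at least one jump, and a.s.\ finite because $p>1/H>2>\beta$ (using $H\in(0,1/2)$), so $p$ exceeds the Blumenthal--Getoor index and the $p$-th powers of the jumps are summable. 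Since now $pk=1$, the second branch of Proposition \ref{prop:convergence_general} gives $\hat{H}(\mathcal{X},p,\Delta_n)_t\overset{u.c.p.}{\to}1/p$; and the $u.c.p.$ mode of convergence supplied by Lemma \ref{lemma:contjump_pvar_conv} carries the local uniformity in $p$ through to the final statement.

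The genuinely substantive step has already been discharged inside Lemma \ref{lemma:contjump_pvar_conv} --- the triangle/Minkowski domination showing that adjoining the pure-jump component $Y$ leaves the rescaled limit of $V(X,p,1,\Delta_n)_t$ unchanged, which rests on $1-pH>1-p/\beta$ in the regime $p<1/H$ --- so I do not anticipate a real obstacle here. The only points requiring care are verifying $0<\Xi<\infty$ a.s.\ in each regime and picking $k$ so as to land on the intended branch of Proposition \ref{prop:convergence_general}; both are immediate given the assumptions.
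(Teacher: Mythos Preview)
Your proposal is correct and follows exactly the paper's own argument: apply Lemma \ref{lemma:contjump_pvar_conv} to verify hypothesis \eqref{eq:convergence_condition} of Proposition \ref{prop:convergence_general} with $k=H$ when $p<1/H$ and $k=1/p$ when $p>1/H$, then read off the limit. You have in fact supplied more detail than the paper does (the checks that $0<\Xi<\infty$ a.s.\ in each regime); the only minor caveat is that ``u.c.p.'' here refers to uniformity in $t$, not in $p$ --- local uniformity in $p$ is handled separately in the paper's final lemma rather than inherited directly from Lemma \ref{lemma:contjump_pvar_conv}.
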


\begin{proof}
    We will again use Proposition \ref{prop:convergence_general} together with Lemma \ref{lemma:contjump_pvar_conv}. When $p<1/H$, we know that that picking $k=H$ will lead to the the required convergence of the power variation. When $p>1/H$, picking $k=1/p$ will kill the scaling, which is what we need for convergence. 
\end{proof}

\begin{lemma}
    The convergence of the roughness signature function is uniform in $p$. 
\end{lemma}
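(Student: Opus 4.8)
The plan is to upgrade the pointwise-in-$p$ convergences supplied by Proposition~\ref{prop:convergence_general} and the corollaries above to convergence that is uniform on compact subsets of the indicated $p$-ranges, and the lever for doing so is convexity. For each fixed $n$ and $\nu\in\{1,2\}$ the map
\[
p\longmapsto V(\mathcal{X},p,1,\nu;\Delta_n)_T=\sum_i\bigl|\Delta_i^{n,\nu}\mathcal{X}\bigr|^{p}=\sum_i\exp\!\bigl(p\log|\Delta_i^{n,\nu}\mathcal{X}|\bigr)
\]
is a finite sum of exponentials in $p$, so $p\mapsto\log V(\mathcal{X},p,1,\nu;\Delta_n)_T$ is a log-sum-exp function and hence convex on $(0,\infty)$ (on the event where $\hat H$ is defined at all, i.e.\ where some increment is non-zero). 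The observation to record is that in every regime occurring in Theorem~\ref{thm:consistency} the normalisation producing a non-degenerate limit has the form $\Delta_n^{c(p)}L_0(\Delta_n)$ with $c(\cdot)$ \emph{affine} in $p$ on the relevant piece --- $c(p)=1-pH$ on a ``flat'' piece, $c(p)=1-p/\beta_{\mathcal{X},T}$ on the low-power pure-jump piece, $c(p)\equiv0$ on a ``jump'' piece --- and $L_0$ slowly varying at $0$. Hence $\Lambda_n^{(1)}(p):=\log\bigl(\Delta_n^{c(p)}L_0(\Delta_n)V(\mathcal{X},p,1,1;\Delta_n)_T\bigr)$ and $\Lambda_n^{(2)}(p):=\log\bigl((2\Delta_n)^{c(p)}L_0(2\Delta_n)V(\mathcal{X},p,1,2;\Delta_n)_T\bigr)$ are, for each $n$, still convex in $p$ on that piece, being a convex function plus an affine function plus a constant.

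Fix one of the (at most two) sub-intervals $I$ on which a single such affine $c(\cdot)$ is in force. The corollaries above --- together with Lemma~\ref{lemma:contjump_pvar_conv} in the continuous-plus-jumps case, and with the first lemma of this appendix applied to the $\nu=2$ statistic --- give that, for each fixed $p\in I$, $\Lambda_n^{(1)}(p)\overset{\mathbb{P}}{\longrightarrow}\log\Xi(p)$ and $\Lambda_n^{(2)}(p)\overset{\mathbb{P}}{\longrightarrow}\log\bigl(2\Xi(p)\bigr)$, where $\Xi(p)$ is the relevant almost surely finite and strictly positive limit (e.g.\ $m_p\int_0^T|\sigma_{1s}|^p\,ds$ or $\sum_{s\le T}|\Delta\mathcal{X}_s|^p$). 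To turn this into uniformity in $p$ I would use the subsequence characterisation of convergence in probability: along an arbitrary subsequence, a diagonal extraction over a countable dense set $\{q_m\}\subset I$ yields a further subsequence along which $\Lambda_n^{(\nu)}(q_m)$ converges almost surely for every $m$; on the resulting full-measure event the convex functions $\Lambda_n^{(\nu)}$ are uniformly bounded and equi-Lipschitz on every compact $K\subset I$ (use their values at points of $\{q_m\}$ straddling $K$), so by Arzel\`a--Ascoli every sub-subsequence has a further sub-subsequence converging uniformly on $K$, necessarily to $\log\Xi$ (resp.\ $\log(2\Xi)$) since any limit must agree with $\lim_n\Lambda_n^{(\nu)}(q_m)$ on the dense set. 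Hence $\Lambda_n^{(1)}\to\log\Xi$ and $\Lambda_n^{(2)}\to\log(2\Xi)$ locally uniformly in $p$, in probability; in particular $\Lambda_n^{(2)}-\Lambda_n^{(1)}\to\log2$ locally uniformly.

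It then remains to read off $\hat H$. Rewriting its definition with the normalisation above gives the identity
\[
\hat H(\mathcal{X},p,\Delta_n)_T=\frac{1-c(p)}{p}+\frac{\Lambda_n^{(2)}(p)-\Lambda_n^{(1)}(p)-\log2}{\log(2)\,p}+\frac{1}{\log(2)\,p}\log\frac{L_0(\Delta_n)}{L_0(2\Delta_n)},
\]
which is exactly the computation carried out in the proof of Proposition~\ref{prop:convergence_general}. On a compact $K\subset I$ the prefactor $1/(\log(2)p)$ is bounded, the last term vanishes uniformly in $p$ because $L_0(\Delta_n)/L_0(2\Delta_n)\to1$ by slow variation, and the middle term vanishes uniformly by the previous paragraph; therefore $\hat H(\mathcal{X},\cdot,\Delta_n)_T$ converges uniformly on $K$ to $p\mapsto(1-c(p))/p$ --- namely $H$ where $c(p)=1-pH$, the constant $1/\beta_{\mathcal{X},T}$ where $c(p)=1-p/\beta_{\mathcal{X},T}$, and $1/p$ where $c(p)\equiv0$ --- which is the limit function of Theorem~\ref{thm:consistency} on each piece. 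Ranging over the finitely many pieces yields the locally uniform convergence asserted there.

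I expect the only genuine obstacle to be the measure-theoretic bookkeeping of the middle step: the underlying power-variation limit theorems are stated as convergence in probability, not almost surely, so the deterministic convex-analysis fact (a pointwise-convergent sequence of finite convex functions on an open interval converges uniformly on compacta) cannot be invoked pathwise, and one must run the subsequence-plus-diagonalisation argument with some care, checking that the equi-Lipschitz constants are uniform over the chosen compact $p$-interval and that the exceptional null sets amalgamate over the countable dense set. Some attention is also needed to the piecewise structure of $c(\cdot)$: a single affine exponent is valid only away from the transition powers $\beta_{\mathcal{X},T}$ and $H^{-1}$, which is precisely why uniformity is claimed on compacts of the stated open sets and not across those powers. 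Everything else --- convexity of log-sum-exp, the affine form of the normalisation, and the algebra collapsing $\hat H$ to $(1-c(p))/p$ plus asymptotically negligible terms --- is routine given the results already established.
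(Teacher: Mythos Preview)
Your argument is correct and takes a genuinely different route from the paper's. The paper treats the pair $\Pi_n(p)=\bigl(\Delta_n^{1-p/k}V(\mathcal{X},p,2,\Delta_n)_T,\ \Delta_n^{1-p/k}V(\mathcal{X},p,1,\Delta_n)_T\bigr)$ as a process indexed by $p\in[p_l,p_u]$ and proves uniform convergence by the standard ``finite-dimensional convergence plus tightness'' scheme: the pointwise results supply the former, and tightness is obtained by bounding the modulus of continuity through a mean-value estimate of the form $|x^p-x^q|\le K|p-q|\bigl(x^{p\wedge q-\varepsilon}+x^{p\vee q+\varepsilon}\bigr)$, after which Ascoli--Arzel\`a (as a tightness criterion in $C[p_l,p_u]$) closes the argument. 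You instead exploit the log-sum-exp structure: $p\mapsto\log V(\mathcal{X},p,1,\nu;\Delta_n)_T$ is convex, the normalising exponent $c(p)$ is affine on each regime, hence the normalised log is still convex, and you then invoke the convex-analysis fact that pointwise convergence of finite convex functions is automatically locally uniform, handling the ``in probability'' aspect by a subsequence-plus-diagonalisation. Your route is shorter and more structural --- it replaces the explicit Taylor bound on $U(\mathcal{X},p,q,\Delta_n)$ by convexity, and the equi-Lipschitz constant comes for free from values at straddling points --- while the paper's route stays within the familiar weak-convergence-in-function-space machinery and does not rely on spotting the log-sum-exp convexity. Both arguments ultimately appeal to Arzel\`a--Ascoli, just at different stages and with different sources of equicontinuity.
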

\begin{proof}
    This proof is practically the same as the one in \cite{Todorov2010}. First, we know that the roughness signature function is a continuous transformation of $V(\mathcal{X},p,1,\Delta_n)_T$ and $V(\mathcal{X},p,2,\Delta_n)_T$, we are done if we can show that 
    \begin{equation}
        \Pi_n(p):=\left(\begin{array}{c}
        \Delta_n^{1-p / k} V\left(p, \mathcal{X}, 2, \Delta_n\right)_T \\
        \Delta_n^{1-p / k} V\left(p, \mathcal{X}, 1,\Delta_n\right)_T
        \end{array}\right) \stackrel{p}{\longrightarrow}\left(\begin{array}{c}
        2^{p / k} \Xi_T(p) \\
        \Xi_T(p)
        \end{array}\right) \text { uniformly on }\left[p_l, p_u\right]        
    \end{equation}
    where $k$ is chosen such that it fits the underlying model (i.e. $k=1/\beta$ for the continuous model with $p_u<\beta$, $k=H$ for the continuous or continuous plus jumps model with $p<1/H$. The pure-jump model with $p>\beta$ and continuous model with $p>1/H$ can be handeled with the same arguments but without the $\Delta_n$ scaling.). For this, we know that we need finite-dimensional convergence and tightness (\cite{limitTheorems}, Proposition IV, 1.17 (b)). We have already established the finite dimensional convergence, so we just need to prove tightness. To do this, we introduce the modulus of continuity \citep{limitTheorems}:
    \begin{equation}
        w\left(\Pi_n, \theta\right):=\sup \left\{\sup _{u, v \in[p, p+\theta]}\left|\Pi_n(u)-\Pi_n(v)\right|: p_l \leq p \leq p+\theta \leq p_u\right\}
    \end{equation}
    Now, we note the the power variation if of the form $x\mapsto \sum x^p$ for some positive values $x$, which means that the terms in the sum are either maximized when $p$ is as large as possible or as small as possible, depending on whether $x$ is larger or smaller than $1$. This means that for a sufficiently small $\theta$ we get the inequality 
    \begin{equation}
        \begin{aligned}
            w\left(\Pi_n, \theta\right) \leq & U\left(\mathcal{X}, p_l+\theta, p_l, \Delta_n\right)_T+U\left(\mathcal{X}, p_u, p_u-\theta, \Delta_n\right)_T \\
            & +U\left(\mathcal{X}, p_l+\theta, p_l, k \Delta_n\right)_T+U\left(\mathcal{X}, p_u, p_u-\theta, k \Delta_n\right)_T
        \end{aligned}
    \end{equation}
    where
    \begin{equation}
        U\left(\mathcal{X}, p, q, \Delta_n\right)_T=\Delta_n \sum_{i=1}^{\left[T / \Delta_n\right]}|| \Delta_n^{-1 / \beta} \Delta_i^n \mathcal{X}\vert^p-|\Delta_n^{-1 / \beta} \Delta_i^n \mathcal{X}|^q \vert
    \end{equation}
    
    Next, so for $q$ in a neighbourhood $(p-\theta,p+\theta)$ around $p$, Taylor's theorem with the mean value theorem implies that 
    \begin{equation}
        \begin{split}
            \sum x^q &= \sum x^p + \frac{\partial}{\partial p}\left(\sum x^p\right)(p)\vert p-q\vert + \xi_2(q)\\
            &= \sum x^p + \sum\log(x)x^p\vert p-q\vert + \xi_2(q)\\
            &\leq \sum x^p + K\sum x^p \vert p-q\vert + \xi_2(q)\\
            \Rightarrow \sum x^p - \sum x^q &\leq K\sum x^\gamma \vert p-q\vert\\
            \Rightarrow \sum x^p - \sum x^q &\leq K\left(\sum x^{p\wedge q-\varepsilon} + \sum x^{p\vee q+\varepsilon}\right) \vert p-q\vert
        \end{split}
    \end{equation}
    where $\gamma\in(p-\theta,p+\theta)$. This implies the inequality 

    \begin{equation}
        U(\mathcal{X},p,q,\Delta_n)_t\leq K\vert p-q\vert\Delta_n\sum_{i=1}^{[t/\Delta_n]}\left(\vert\Delta_n^{-1/\beta}\Delta_i^n \mathcal{X}\vert^{p\wedge q-\varepsilon} + \vert \Delta_n^{-1/\beta}\Delta_i^n\mathcal{X}\vert^{p\vee q+\varepsilon}\right)
    \end{equation}
    for an arbitrarily small $\varepsilon$ and some constant $K$. Now by appealing to Ascoli-Arzela's criteria for tightness (Proposition IV.3.26 (ii) of \cite{limitTheorems}), we can establish the tightness. Condition IV.3.21 is clearly satisfied, as we are looking over compact sets, while IV.3.27 follows by the inequality above. 
    
    
\end{proof}

\end{section}

\end{document}